\documentclass[letterpaper,journal]{IEEEtran}
\usepackage{amsmath,amsfonts}
\usepackage{algorithm}
\usepackage{algorithmic}

\usepackage{color}
\usepackage{xcolor}
\usepackage{array}
\usepackage[caption=false,font=normalsize,labelfont=sf,textfont=sf]{subfig}
\usepackage{textcomp}
\usepackage{stfloats}
\usepackage{url}
\usepackage{verbatim}
\usepackage{graphicx}
\usepackage{cite}
\usepackage{amsmath}
\usepackage{amsthm}
\usepackage{amssymb}
\hyphenation{op-tical net-works semi-conduc-tor IEEE-Xplore}

\usepackage{enumitem}
\usepackage{subfig}
\usepackage{booktabs} 
\usepackage{multirow} 
\newcommand{\q}[1]{\textcolor{black}{#1}}
\newcommand{\revi}[1]{\textcolor{black}{#1}} 
\newcommand{\revii}[1]{\textcolor{black}{#1}} 

\usepackage[utf8]{inputenc}   
\usepackage[T1]{fontenc}      
\usepackage{CJKutf8}

\newtheorem{theorem}{Theorem}

\newtheorem{lemma}{Lemma}
\newtheorem{definition}{Definition}

\usepackage{mathptmx}
\DeclareMathAlphabet{\mathcal}{OMS}{cmsy}{m}{n}
\DeclareMathAlphabet{\mathrm}{OT1}{bch}{m}{n}
\DeclareMathAlphabet{\mathit}{OT1}{bch}{m}{it}

\begin{document}
\begin{CJK}{UTF8}{gbsn}

\title{Joint Optimization of DNN Model Caching and Request Routing in Mobile Edge Computing}
\author{
Shuting Qiu,
Fang Dong,~\IEEEmembership{Member,~IEEE,}
Siyu Tan,
Ruiting Zhou,~\IEEEmembership{Member,~IEEE,}
Dian Shen,~\IEEEmembership{Member,~IEEE,}
Patrick P. C. Lee,~\IEEEmembership{Senior~Member,~IEEE}
and 
Qilin Fan,~\IEEEmembership{Member,~IEEE,}

\thanks{
This work is supported by National Natural Science Foundation of China under Grants, No. 62232004, Jiangsu Provincial Frontier Technology Research and Development Program under Grant BF2024070, Shenzhen Science and Technology Program under Grant KJZD20240903100814018, Jiangsu Provincial Key Laboratory of Network and Information Security under Grants No.BM2003201, Key Laboratory of Computer Network and Information Integration of Ministry of Education of China under Grants No.93K-9, Research Grants Council of Hong Kong (GRF 14201523), and partially supported by Collaborative Innovation Center of Novel Software Technology and Industrialization. (Corresponding author: Fang Dong.)

An earlier version of this paper appeared at IEEE INFOCOM 2025~\cite{cocar} [DOI: 10.1109/INFOCOM55648.2025.11044457].
\revi{In this extended version, we address online scenarios with unpredictable future requests. We propose CoCaR-OL, a framework that adaptively adjusts dynamic DNN caching strategies. We further conduct new experiments to validate its effectiveness.}
}

\thanks{
Shuting Qiu, Fang Dong, Siyu Tan, Ruiting Zhou and Dian Shen
are with the School of Computer Science and Engineering, Southeast University, Nanjing 211189, China (email: qiushuting@seu.edu.cn; fdong@seu.edu.cn; sytan@seu.edu.cn; ruitingzhou@seu.edu.cn; dshen@seu.edu.cn).
}
\thanks{
Patrick P. C. Lee is with the Department
of Computer Science and Engineering, The Chinese University of Hong Kong, Hong Kong (email: pclee@cse.cuhk.edu.hk).
}
\thanks{
Qilin Fan is with the School of Big Data and Software Engineering, Chongqing University, Chongqing 400044, China (email: fanqilin@cqu.edu.cn).
}
}

\maketitle

\begin{abstract}
\label{abstract}
Mobile edge computing (MEC) can pre-cache deep neural networks (DNNs) near end-users, providing low-latency services and improving users' quality of experience (QoE). However, caching all DNN models at capacity-limited edge servers is difficult, and the impact of model loading time on QoE remains underexplored.  We explore dynamic DNNs by disassembling a complete DNN model into interrelated submodels to enable fine-grained joint optimization of submodel caching and request routing to balance inference precision and loading latency. 
In this paper, we study the joint dynamic model caching and request routing problem in MEC networks, aiming to maximize user request inference precision under constraints of server resources, latency, and model loading time. 
\revii{We propose CoCaR, an offline algorithm based on linear programming and random rounding that optimizes joint decisions with a provable performance bound. Furthermore, we develop an online extension, CoCaR-OL, to adapt to dynamic and unpredictable request patterns.}
The simulation results demonstrate that CoCaR improves the average inference precision for user requests by 40.1\% over state-of-the-art baselines. 
In addition, CoCaR-OL achieves an improvement of at least 32.3\% in users' QoE over competitive baselines.

\end{abstract}

\begin{IEEEkeywords}

Mobile edge computing, dynamic DNN, model caching, request routing, joint optimization

\end{IEEEkeywords}
\section{Introduction}
\label{introduction}

Recently, machine learning methods, especially deep neural networks (DNNs), have been transforming various application domains \cite{hu,wtf1,ton-dnn2}, such as computer vision \cite{cv}, speech recognition \cite{sr}, and autonomous driving \cite{sdc}. To promote efficient task processing, mobile edge computing (MEC) \cite{mec,drl2,ton-mec1,ton-mec2} has emerged as a promising paradigm to provide end-users with low-latency computation, caching, and transmission capabilities \cite{mec1,math1,ton-tan}. In MEC, DNN models can be pre-cached at base stations (BSs) with edge servers, enabling rapid responses to user requests \cite{mec2}, \cite{mec3}. However, unlike cloud servers, due to the limited resources of a BS, only a few popular DNN models can be cached simultaneously to serve users \cite{zhao}, \cite{mec4}.
Additionally, routing requests to BSs with relevant cached models is essential for user coverage and system efficiency, and is hence critical to users' quality of experience (QoE) in multi-edge scenarios \cite{chu,ton-route,wtf2}. Since model caching and request routing are tightly coupled, jointly optimizing them is essential for users' QoE.

Existing joint optimization schemes for model caching and request routing mainly focus on caching complete original models at resource-limited BSs \cite{poularakis,math2,drl1,yao,tsc1}, leading to inefficient resource \q{utilization} and poor user QoE \cite{smart-momeps}.
Fortunately, dynamic DNNs \cite{ddnn,ton-dnn} offer a promising solution by enabling flexible adjustments to the model structure. In this paper, we define dynamic DNNs as a model versioning framework in which a base model is partitioned into multiple submodels of varying depths and precisions for fine-grained caching. Specifically, we disassemble a complete original DNN model into a set of interrelated submodels by dividing the feature extraction layer, allowing flexible switching among them by adjusting the number of layers. This approach enables fine-grained resource allocation in a multi-server caching environment, thereby distinguishing our method from device-edge split computing \cite{split1,split2}, which partitions a single inference task across two locations for collaborative execution. Consequently, our use of dynamic DNNs enables more flexible model caching and request routing, allowing BS resources to be used in a more fine-grained manner.


However, incorporating dynamic DNNs complicates the joint caching and routing decisions, leading to two primary challenges.
First, caching schemes for submodels of the same dynamic DNN at a BS are mutually exclusive, permitting only one submodel to be cached at a time. This necessitates determining not only which DNN models to cache but also which submodel of these dynamic DNNs to cache, thereby increasing the complexity of the variable-coupled joint optimization problem. Second, inference requests for different DNN models arrive in real-time, requiring timely adjustments to caching schemes \cite{qiu}. Moreover, caching models at BSs incurs loading latency \cite{load}, and different caching schemes for dynamic DNNs yield different model-valid service times and inference precision, thereby affecting users' QoE. Thus, the adverse impact of model loading time on QoE must be considered and minimized. Although some studies consider model loading time \cite{load,load1}, they use it only to refine solutions rather than to inform caching decisions, leaving existing approaches insufficiently adaptive to dynamic, resource-constrained MEC environments.


In this paper, we study a \textbf{J}oint \textbf{D}ynamic model \textbf{C}aching and request \textbf{R}outing problem (JDCR) with the goal of maximizing user request inference precision.
Based on linear programming (LP) and randomized rounding, we first propose a novel offline multi-edge \textbf{Co}llaborative dynamic model \textbf{Ca}ching and request \textbf{R}outing optimization algorithm (CoCaR), which provides an approximate optimal solution to the JDCR problem with theoretical guarantees.
Subsequently, we extend CoCaR to CoCaR-OL to adapt to online scenarios where user requests are hard to predict.
Specifically, we divide time into consecutive observation windows. When making decisions in each window, we first consider the BS resource and latency constraints, as well as the impact of model loading time on QoE due to the switching of caching schemes.
Then, based on BS caching results from the previous window, we utilize the capability of rapid switching between submodels of the same dynamic DNN to flexibly adjust the caching scheme, thereby increasing the valid service time of the model.
Additionally, by leveraging the fine-grained resource utilization feature of dynamic DNNs, BS resources can be used more efficiently, enabling the caching of more DNN models to satisfy a greater number of user demands.
The contributions of this paper are as follows:

\begin{itemize}[leftmargin=*]
    
    \item We formulate the JDCR problem in MEC networks. The goal is to maximize the total inference precision for user requests while adhering to constraints on BS resources, latency, and model loading time. As this JDCR problem is a non-convex, nonlinear integer programming problem that poses significant challenges for resolution, we perform equivalent transformations and relaxations.
    
    \item We propose a novel \q{offline} algorithm, CoCaR, based on LP and random rounding to solve the JDCR problem. Specifically, CoCaR can leverage dynamic DNNs to provide a more flexible and fine-grained caching scheme.
    Theoretical analyses show that CoCaR has strict performance guarantees and achieves an approximation ratio of $(1-\sqrt{\frac{4\ln|\mathcal{H}|}{\mathbb{P}^{\dagger}}})^2$ to the optimal integer solution, where $|\mathcal{H}|$ represents the total number of submodels, and $\mathbb{P}^{\dagger}$ is the objective value of the optimal fractional solution.
    \item We further extend CoCaR to handle online scenarios where future user requests are difficult to predict in advance and both caching and routing decisions must be made in real-time. Thus, we propose CoCaR-OL, an online variant that determines dynamic DNN model caching strategies based on expected future gains estimated from historical user request patterns. It further adopts a greedy algorithm for routing decisions, enabling effective adaptation to dynamic and unpredictable demands.

    \item Extensive simulations \q{validate} the effectiveness of the proposed CoCaR \q{and CoCaR-OL}. The results show that CoCaR achieves at least a 40.1\% improvement in average inference precision and a 42.1\% increase in cache hit rate over four state-of-the-art algorithms, utilizing at least 86\% of BS resources.
    In online scenarios, CoCaR-OL attains at least a 1.71$\times$ improvement in users' QoE and a 1.73$\times$ increase in cache hit rate over the baseline.       
\end{itemize}

\revi{The remainder of this paper proceeds as follows. Sec.~II reviews related work. Sec.~III describes the background and motivation. Sec.~IV introduces the system model and formulates the joint caching and routing problem. Sec.~V focuses on the offline scenario, proposing the CoCaR algorithm and analyzing its approximation ratio. Sec.~VI extends the study to online scenarios and proposes the CoCaR-OL algorithm based on expected future gains. Sec.~VII presents the simulation results, and Sec.~VIII concludes the paper.}

\section{\revi{Related work}}
\label{related work}

\textbf{Caching and Routing.}
Since service caching and request routing are critical in MEC, jointly optimizing them to guarantee users’ QoE remains a major challenge.
Zhang \textit{et al.} \cite{{math2}} jointly consider service caching, computation offloading, transmission, and computation resource allocation to minimize overall computation and latency costs.
Chu \textit{et al.} \cite{chu} study how to maximize users' QoE by jointly optimizing service caching, resource allocation, and task offloading decisions.
Zhao \textit{et al.} \cite{zhao} study how to efficiently offload dependent tasks to edge nodes with limited service caches and design an efficient algorithm based on convex programming to solve this problem.
Bi \textit{et al.} \cite{bi} consider the joint optimization of service cache locations, computation offloading decisions, and system resource allocation on a single edge server.
\revii{In video caching, Li \textit{et al.} \cite{video} optimize edge server storage allocation for various bitrate versions by leveraging the rate-distortion characteristics of video content. 
Additionally, several works in serverless computing have addressed the latency issues in containerized deployments. For example, Pagurus \cite{serverless1} mitigates cold starts by enabling cross-function container transformation and sharing idle warm containers, while RainbowCake \cite{serverless2} proposes a hierarchical pre-warming and residency mechanism to facilitate fine-grained, layered caching and cross-function sharing and reduce cold-start latency.}

\revii{However, these existing studies predominantly focus on caching complete, original services or function instances at BSs. In contrast, we consider the structural dependencies among DNN submodels, treating them as correlated rather than independent cache objects, thereby enabling more efficient utilization of BS resources. Moreover, most prior works neglect the impact of service loading time on joint caching and routing decisions. In practical MEC environments where user request patterns are highly dynamic, overlooking loading latency can severely degrade users’ QoE.}

\revii{Some studies  (e.g., \cite{ton-mec2,mec4}) address online service placement via the Lyapunov framework, which takes reactive, drift-plus-penalty optimization based on current system states to ensure long-term stability. In contrast, our online approach (CoCaR-OL) takes a predictive, expected-future-gain approach to proactively manage the multi-time-slot download latency associated with dynamic submodel switching. We also address the switching cost that couples cache and routing decisions, which cannot be directly addressed by the standard Lyapunov drift bounds.} 


\textbf{Dynamic DNNs.}
Dynamic DNNs can meet various requirements under different resource constraints by adjusting the number of channels and layers.
MutualNet \cite{mutualnet} and MSDNet \cite{msdnet} achieve adaptive precision-latency tradeoffs by varying DNN network width and depth, respectively.
Dynamic-OFA \cite{ddnn} provides an efficient architecture for GPUs and CPUs by using a shared backbone model that adjusts DNN width, depth, filter sizes, and input resolution.
SubFlow \cite{subflow} dynamically builds and executes DNN sub-networks for flexible time-bound inference and training.
Smart-MOMEPS \cite{smart-momeps} leverages a multi-exit mechanism and dynamic service migration to alleviate DNN performance degradation.
SN-Net \cite{sn-net} generates numerous new stitched networks by combining pre-trained models from a model family, enabling adaptation to different resource constraints. 
ESTA \cite{esta}, on the other hand, uses efficient parameter fine-tuning and training-time gradient statistics to assemble stitched models of various sizes from existing pre-trained DNNs, achieving a stable balance between precision and efficiency.

Unlike existing studies, we introduce dynamic DNNs into edge caching by dividing a complete DNN into multiple interrelated, switchable submodels. 
This approach provides a novel solution for fine-grained, resource-efficient, and flexible edge-based model caching.

\section{Background and Motivation}
\label{motivation}

\begin{figure}[!t]
\centering
\includegraphics[width=.45\textwidth]{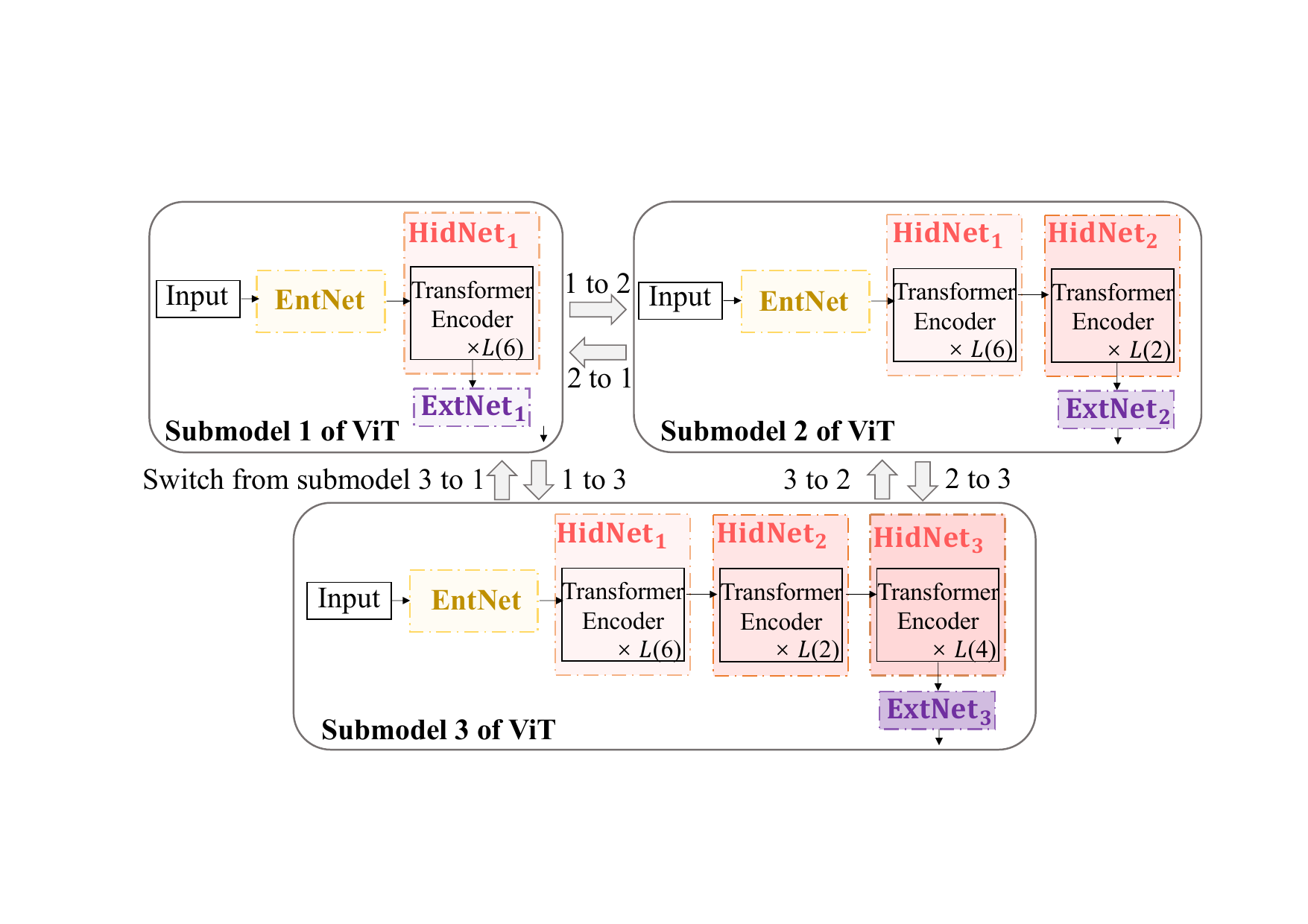}
\caption{Submodels division and switching of ViT. When switching from submodel 1 to submodel 2 of ViT,  we only need to remove $\text{ExtNet}_1$ of submodel 1 and then connect $\text{HidNet}_2$ and $\text{ExtNet}_2$ to form submodel 2.} 
\label{fig:dynamic_dnn}
\end{figure}

\revii{
\textbf{Dynamic DNN Architecture}. As shown in Fig.~\ref{fig:dynamic_dnn}, using the Vision Transformer (ViT) \cite{vit} as a representative example, we divide a standard DNN into three primary components: the entry network (EntNet), the hidden network (HidNet), and the exit network (ExtNet). To support varying levels of inference precision, the original DNN is partitioned into multiple submodels of different sizes. This is achieved by dividing HidNet into segments of varying depths (e.g., $L(k)$ denotes a segment with $k$ Transformer encoder layers), where each submodel comprises multiple HidNet segments and is equipped with its own uniquely trained ExtNet. The arrows in Fig.~\ref{fig:dynamic_dnn} indicate the valid transitions for submodel switching, which can be executed rapidly by appending or pruning specific HidNet segments and their associated ExtNets.
}

{\color{black}
\textbf{Motivating Example.} To demonstrate the advantages of the dynamic DNN architecture, we compare a traditional caching scheme that strictly utilizes complete models (i.e., a static DNN scheme) against our proposed dynamic DNN-based scheme at a single BS. Aside from the specific parameters outlined below, the example utilizes the experimental configurations detailed in Sec.~\ref{para}.

As shown in Fig.~\ref{fig:motivation_example}, there are two DNN model types, A and B, each divided into three submodels of different sizes and precisions.  In each observation window, 100 users uniformly initiate model inference requests. The duration of each observation window is set to 5\,s. The BS has a strict cache capacity of 2\,GB.  Cached models are loaded into the BS memory at the start of each window and can only begin providing valid inference services once all required layers of the submodel are fully resident in memory. 

\begin{figure}[t]
\centering
\includegraphics[width=.48\textwidth]{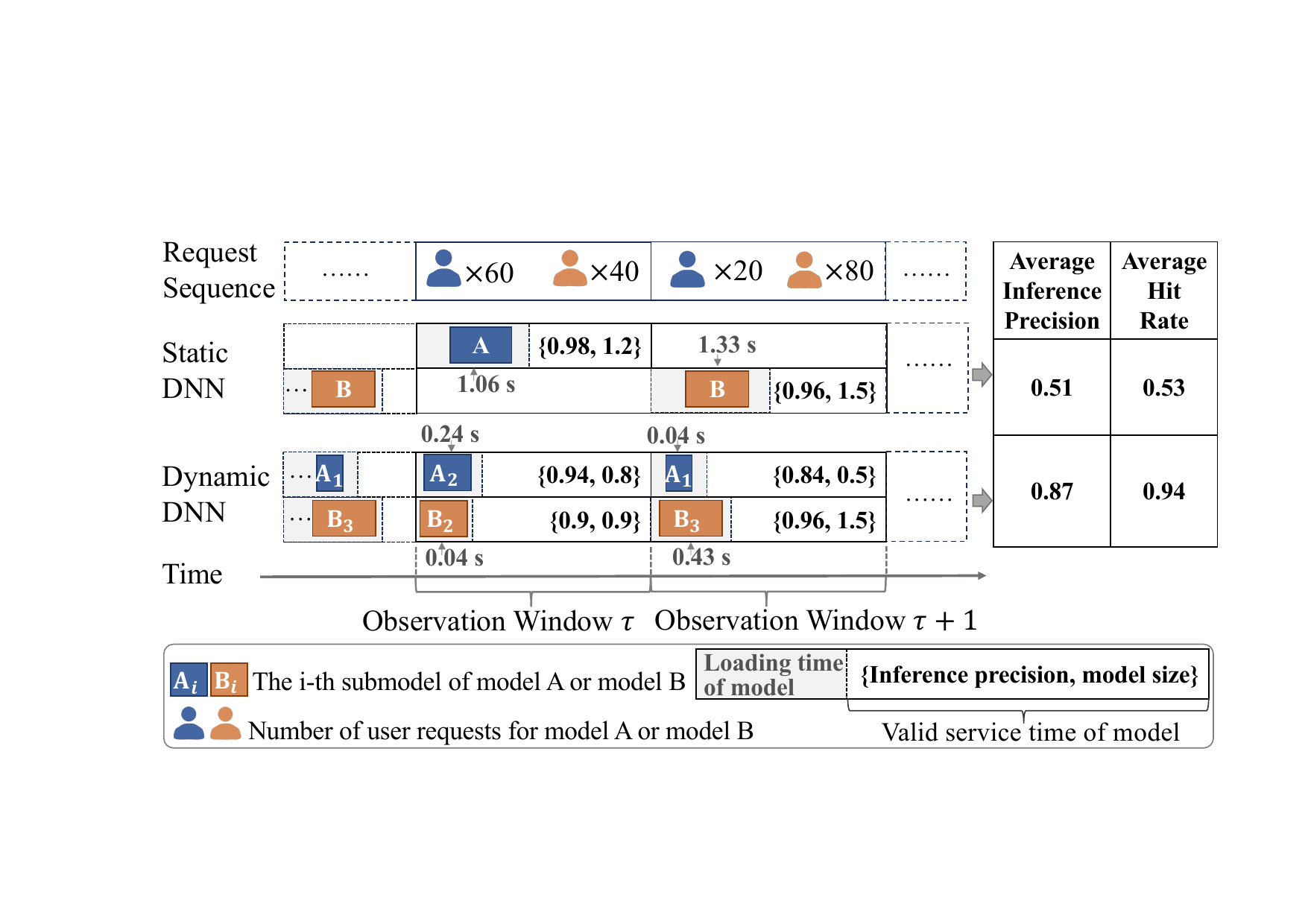}
\vspace{-6pt}
\caption{Examples of static DNN and dynamic DNN schemes.} 
\label{fig:motivation_example}
\end{figure}

Suppose that in the observation window $\tau$, 60 users request Model~A and 40 users request Model~B. In the subsequent window $\tau+1$, the request distribution shifts to 20 users for Model~A and 80 users for Model~B. Next, we evaluate how the two schemes perform across both windows.
\begin{itemize}[leftmargin=*]
\item 
{\em Static DNN scheme (complete models only):} In window $\tau$, the BS caches the complete original Model~A (precision 0.98, size 1.2\,GB), which takes 1.06\,s to load. It must drop all requests for Model~B as caching both complete models would require 2.7\,GB, exceeding the 2\,GB limit. In window~$\tau+1$, to serve the new surge in Model~B requests, the static scheme drops Model~A and attempts to load the complete Model~B (precision 0.96, size 1.5\,GB) from scratch, incurring a high loading latency of 1.33\,s.
\item
{\em Dynamic DNN scheme (submodel switching):} In window $\tau$, the dynamic scheme successfully fits both models into the cache by utilizing smaller submodels (e.g., caching submodel~2 for Model~B). In window $\tau+1$, it can leverage the previously cached state: instead of loading a complete model from scratch, it upgrades Model~B from submodel~2 to a higher precision state (submodel~3: precision 0.96, size 1.5\,GB). By loading only a submodel, this switch takes only 0.43\,s. This leaves 0.5\,GB of available capacity, which perfectly accommodates submodel~1 of Model~A (precision 0.84, size 0.5\,GB), taking only 0.04\,s to load.
\end{itemize}

With dynamic submodels, the BS successfully accommodates both models within the strict 2\,GB capacity constraint and significantly reduces loading latency. To quantify the benefit, we consider two metrics: the average inference precision ($P_{avg}$) and the average cache hit rate ($H_{avg}$). In each observation window $\tau$, $P_{avg}$ and $H_{avg}$ are calculated as
$P_{avg} = \frac{\sum_{h \in \mathcal{H}_c} \left( \lfloor \frac{u_h}{|\tau|} \times (|\tau| - l_h) \rfloor \times p_h \right)}{U_{total}}$, and 
$H_{avg} = \frac{\sum_{h \in \mathcal{H}_c} \left( \lfloor \frac{u_h}{|\tau|} \times (|\tau| - l_h) \rfloor \right)}{U_{total}}$, where $\mathcal{H}_c$ denotes the set of cached submodels, $u_h$ is the number of user requests for submodel $h$ in window $\tau$ with duration $|\tau|$, $l_h$ and $p_h$ denote the loading time and inference precision of submodel $h$, respectively, and $U_{total}$ is the total number of user requests in window $\tau$. 

For example, calculating the average inference precision for the static DNN across both windows incorporates its performance serving Model~A in window~$\tau$ and Model~B in window $\tau+1$: $(\lfloor(60/5) \times (5-1.06)\rfloor \times 0.98 + \lfloor(80/5) \times (5-1.33)\rfloor \times 0.96) / 200 = 0.51$. Ultimately, the average inference precision and average hit rate of the dynamic DNN scheme are 0.87 and 0.94, respectively (absolute increases of 0.36 and 0.41 over the static DNN baseline). This underscores the need for novel caching and routing optimization designs for the dynamic DNN scheme in MEC environments.
}

\section{System model and problem formulation} 
\label{system model}

\begin{figure}[t]
\centering
\includegraphics[width=.45\textwidth]{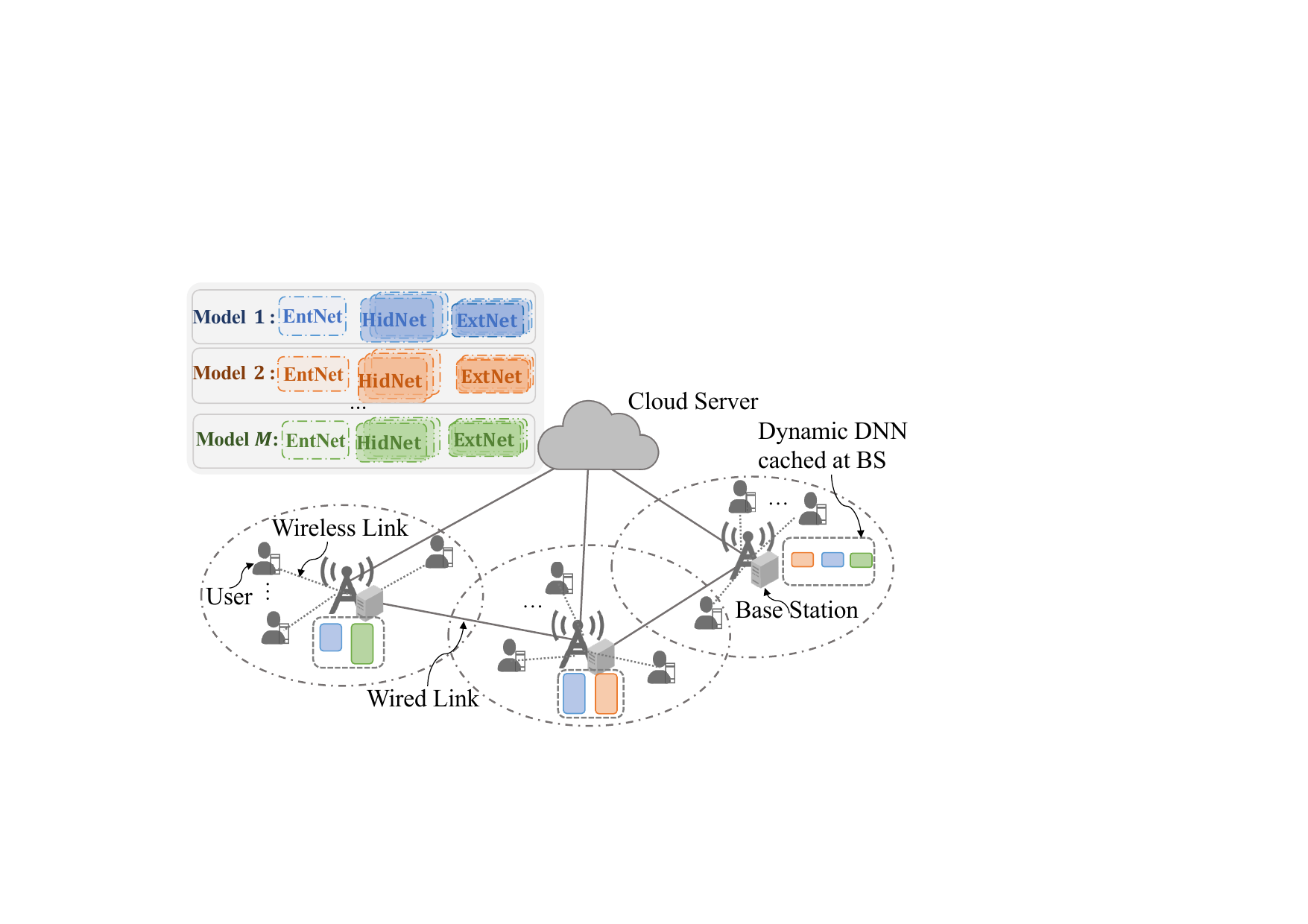}
\caption{System model. The same color indicates the dynamic DNN associated with a model type, while different lengths of that color represent different submodels of the dynamic DNN.} 
\label{fig:system_model}
\end{figure}

\subsection{System Overview} 
As shown in Fig. \ref{fig:system_model}, 
we consider an MEC system consisting of $\mathcal{N} = \{1, 2, \cdots, N\}$ BSs equipped with edge servers.
These BSs have caching and computing capabilities and can communicate with each other through high-speed wired networks. They can provide some of the $M$ services for end users \q{$\mathcal{U} = \{1, 2, \cdots, U\}$}. 
We assume that users are randomly and uniformly distributed within the BS coverage area, and denote the closest BS to each user $u$ (the home BS) by $\hat{n}_u$.
We focus on joint model caching and request routing in MEC networks.
Let \q{$\mathcal{M} = \{1, 2, \cdots, M\}$} represent the set of different DNN model types, and $m_u \in \mathcal{M}$ the DNN model inference request generated by user $u$.
Let \q{$\mathcal{H}(m) = \{h^{m}_0, h^{m}_1, \cdots, h^{m}_{H(m)}\}$} denote the set of submodels \q{$\{h^{m}_j\}_{j=1}^{H(m)}$} of the dynamic DNN associated with \q{$m$}, along with an empty submodel \q{$h^{m}_0$} 
\footnote{Here, $h^{m}_0$ is introduced for modeling convenience, with no impact on BS resource usage or model inference precision.}.
Then, denote by \q{$\mathcal{H} = \bigcup_{m \in \mathcal{M}} \mathcal{H}(m)$} the set of all submodels, and $|\mathcal{H}|$ the total number of submodels.
For each model type \q{$m$}, a BS can cache at most one submodel concurrently to provide service.
\q{Furthermore, we define a partial order $(\mathcal{H}, \preceq)$ over the set of submodels $\mathcal{H}$, where each $\mathcal{H}(m)$ is linearly ordered, and elements from different $\mathcal{H}(m)$ are incomparable.
That is, for any $h_i, h_j \in \mathcal{H}$, we have $h_i \preceq h_j$ if and only if $h_i, h_j \in \mathcal{H}(m)$, and $h_i$ is a submodel no larger than $h_j$.
Similarly, $h_i \succ h_j$ holds if and only if $h_i$ and $h_j$ belong to the same $\mathcal{H}(m)$, and $h_i$ is a larger submodel than $h_j$.}

Considering the dynamic nature of the MEC network, time is divided into discrete time slots, denoted as $\mathcal{T} = \{1, 2, \ldots, t, \ldots, T\}$. 
Multiple consecutive time slots form an observation window, denoted as $\Gamma = \{1, 2, \ldots, \tau, \ldots, |\Gamma|\}$. 
Each observation window updates the caching and routing decisions based on user requests. If user $u$'s request is routed to the target BS that caches the submodel of the dynamic DNN of $m_u$, it is called a hit, and model inference is performed to obtain a result with corresponding precision; otherwise, it is a miss, and the user $u$'s request will be sent to the cloud with an inference precision of 0 at the edge.

Next, we focus on the joint offline caching and routing decisions within an observation window $\tau$. For brevity, we omit $\tau$ when there is no ambiguity.

\subsection{Model Caching}
It is crucial to decide which DNN models to cache on resource-limited BSs.
Let \revii{the} binary variable \q{$ x_{n,h} \in \{0, 1\}$} be the model caching decision for submodel $h$ in BS $n$ at window $\tau$,
where $x_{n,h}=1$ if submodel $h$ has been cached at BS $n$, and $ x_{n,h}=0$ otherwise.
Each BS is required to cache a submodel of the dynamic DNN associated with each model type:
\begin{equation}
    \label{con.x}
    \sum_{h \in \mathcal{H}(m)}  x_{n,h} = 1, \forall n \in \mathcal{N}, m \in \mathcal{M},
\end{equation}
where if \q{$x_{n,h^{m}_0}=1$}, it indicates that the empty submodel of model type \q{$m$} is cached (which is equivalent to not caching the model type \q{$m$}).

In addition, the total cached models cannot
exceed the memory capacity of BS $n$:
\begin{equation}
    \label{con.R}
    \sum_{h\in \mathcal{H}}  x_{n,h}\cdot r_h \le R_n, \forall n \in \mathcal{N},
\end{equation}
where $r_h$ is the required memory space to cache submodel $h$, and $R_n$ is the total memory capacity of BS $n$. 

\begin{figure}[t]
\centering
\includegraphics[width=.36\textwidth]{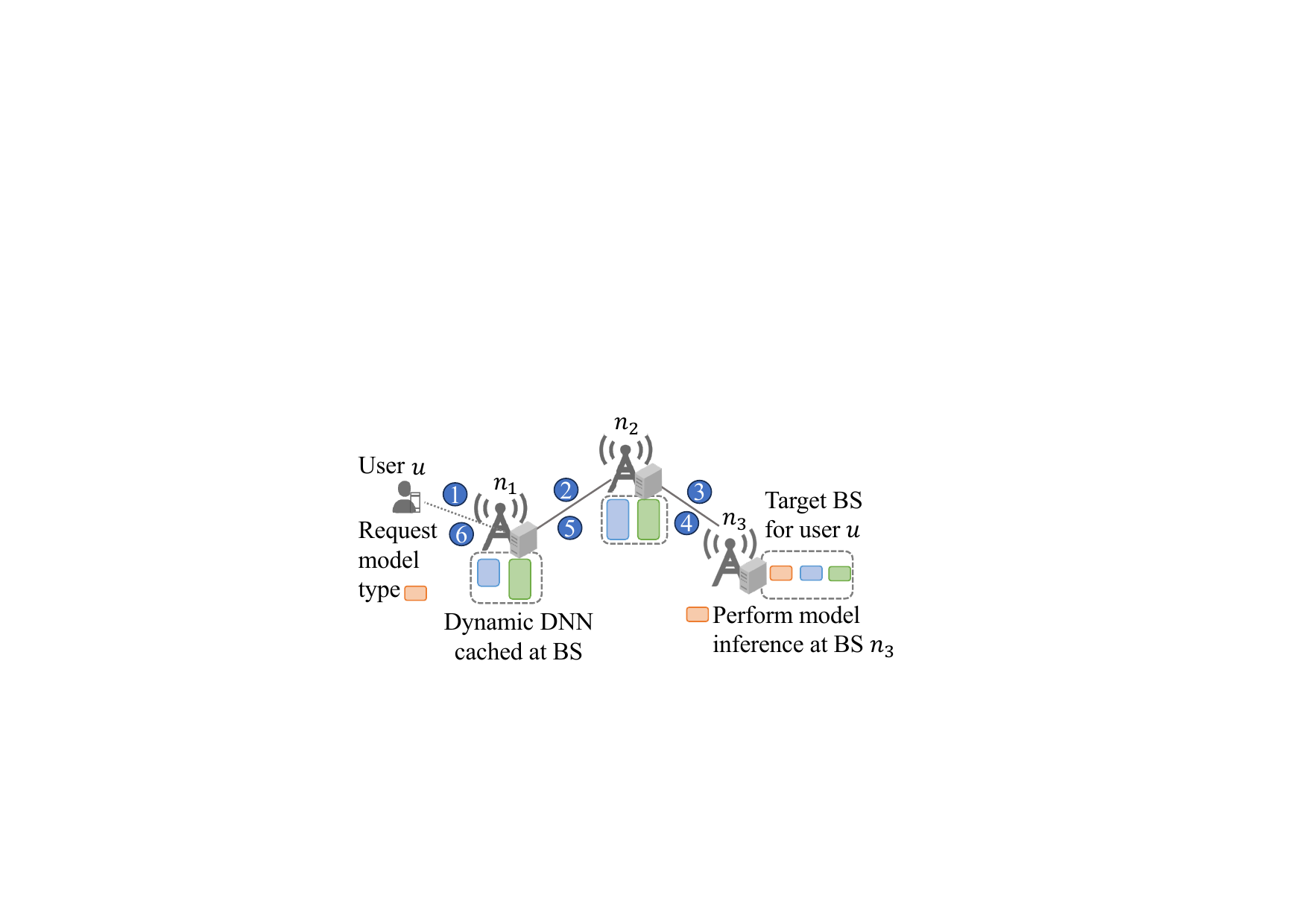}
\caption{Communication latency: Routing user $u$'s request involves wireless transmission from $u$ to home BS $n_1$, wired transmission from $n_1$ to target BS $n_3$, and a total of 6 hops from initiating the request to receive the inference result.} 
\label{fig:routing}
\end{figure}

\subsection{Request Routing}
\label{request routing model}

In general, there are many methods to pre-fetch user requests \cite{math2, ton-user}, which can then be routed to the target BS or the cloud for inference. Let $ y_{n,u} \in \{0, 1\}$ be the user request routing decision variable at window $\tau$, where  $ y_{n,u} =1$ if the request from user $u$ is routed to BS $n$; otherwise, $ y_{n,u} =0$. Let $d _u$, $ddl_u$, and $s_u$ denote the request data size of user $u$, the maximum perceived latency user $u$ can tolerate, and the initiation time of user $u$'s request in the observation window $\tau$, respectively. 

First, each user $u$'s request is routed to at most one BS, and requests that cannot be routed are sent to the cloud:
\begin{equation}
    \label{con.y}
    \sum_{n \in \mathcal{N}}  y_{n,u} \le 1, \forall u \in \mathcal{U}.
\end{equation}

\begin{table}[t]
\caption{List of Notations.}
\label{table:notation}
\vspace{-6pt}
\centering
\renewcommand{\arraystretch}{1.2} 
\begin{tabular}{l|l}
\toprule 
\textbf{Notation} & \textbf{Description} \\
\midrule
$\mathcal{N}, \mathcal{U}$                           & Set of base stations and users\\
$\mathcal{M}, \mathcal{H}$                           & Set of DNN model types and submodels \\
$\mathcal{H}(m)$                                     & Set of submodels associated with model type $m$\\
$\hat{n}_u$                                          & Closest BS to user $u$ (home BS)\\
$m_u$                                                & User $u$’s requested DNN model type $m$\\
$x_{n,h}$                                            & Caching variable for submodel $h$ at BS $n$ in window $\tau$ \\
$y_{n,u}$                                            & Routing variable for user $u$ to BS $n$ in window $\tau$\\
$d_u, ddl_u$                                         & User $u$'s data size and maximum tolerable latency\\
$s_u$                                                & User $u$'s request initiation time in window $\tau$ \\   
$R_n$                                                &Memory capacity of BS $n$  \\
$C_n$                                                &Computation capacity of BS $n$ \\
$r_h$                                                & Size of submodel $h$\\
$c_h$                                                & Flops of submodel $h$ per data unit\\
$\varphi_{n}$                                        & Wireless transmission rate of BS $n$\\ 
$r_{n', n}$                                          & Wired transmission rate between BS $n'$ and BS $n$\\ 
$\lambda_{u,n}$      
& Propagation latency for user $u$ to BS $n$ in window $\tau$\\
$T^{\text{off}}_u$                                          &User $u$'s communication latency in window $\tau$\\
$T^{\text{infer}}_u$                                        &User $u$'s inference latency in window $\tau$\\
$\mathbb{T}_{u}$                                     &User $u$'s total end-to-end inference latency in window $\tau$\\
$T^{\text{load}}_{n,m}$                                      &Load latency of model type $m$ at BS $n$ in window $\tau$\\
$p_h$                                                &Expected inference precision of submodel $h$\\
\bottomrule 
\end{tabular}
\end{table}

Second, users experience end-to-end inference latency from request initiation to receiving the result, which includes:
\begin{itemize}[leftmargin=*]
    \item \textit{Communication latency}. As shown in Fig. \ref{fig:routing}, the communication latency for routing user requests can be calculated as 
    $T^{\text{off}}_{u} = \sum_{n\in \mathcal{N}}  y_{n,u} \cdot (\frac{d_u}{\varphi_{\hat{n}_u}} + \frac{d_u}{r_{\hat{n}_u,n}} + \lambda_{u,n}), \forall  u \in \mathcal{U}$, 
    where $\varphi_{\hat{n}_u} = W_c log_2(1 + \frac{g_u E_{n,u}}{\mathbb{N}_0})$ is the wireless transmission rate, $g_u$ is user $u$'s transmission power, $E_{n,u}$ is the channel gain from user $u$ to BS $n$, $W_c$ is the channel bandwidth, and $\mathbb{N}_0$ is the noise power 
    \footnote{In this paper, we simplify the transmission model by assuming that BSs adopt a fixed transmission rate $\varphi_{\hat{n}_u}$ for all users $u$ \cite{chen2019}.}; 
    $r_{\hat{n}_u,n}$ is the average wired transmission rate between home BS $\hat{n}_u$ and target BS $n$; and $\lambda_{u,n}$ denotes the propagation latency of user $u$’s request routed to BS $n$, measured from its initiation to the reception of the inference result, which depends on the number of hops.
    
    \item \textit{Inference latency}. The inference latency of user $u$'s request at the target BS $n$ can be calculated as $T^{\text{infer}}_{u} = \sum_{n\in \mathcal{N}}\sum_{h\in \mathcal{H}(m_u)}  y_{n,u} \cdot  x_{n,h} \cdot \frac{c_h\cdot d_u}{C_n}, \forall u \in \mathcal{U}$, where $c_h$ denotes the computational flops required for model $h$ to process one unit of data request, and $C_n$ denotes the maximum computational capacity that BS $n$ can provide.
\end{itemize}

Thus, user $u$'s end-to-end inference latency $\mathbb{T}_{u}$ can be calculated as
$\mathbb{T}_{u} = T^{\text{off}}_{u} + T^{\text{infer}}_{u}, \forall u \in \mathcal{U}$. To ensure the QoE of end-users, $\mathbb{T}_{u}$ must not exceed the maximum perceived latency $ddl_u$ that user $u$ can tolerate:
\begin{equation}
    \label{con.T}
    \mathbb{T}_{u} \le ddl_u, \forall u \in \mathcal{U}.
\end{equation}

Additionally, the time required to load the DNN model into the BS's memory at each observation window cannot be ignored.
\q
{In offline scenarios, we can proactively pre-download the required DNN models from the cloud to the secondary storage of the BS based on the expected user requests in the upcoming window, and then load the models into the memory at the beginning of the next observation window.
}
Let $T^{\text{load}}_{n,m}$ denote the loading latency for caching the submodel of the dynamic DNN associated with model type $m$ at BS $n$.
When caching submodel $h$ of a model type that is already cached in the previous window, let $D_m^{swit}(h', h)$ denote the switching latency from submodel $h'$ to submodel $h$; 
Conversely, when caching a model that has not been cached before, let $D^{new}_{m}(h^m_0, h)$ represent the latency to add submodel $h$ of model $m$.
For the window $\tau$, the model loading latency of the model type $m$ at BS $n$ can be calculated as: 
\begin{equation}
    \begin{aligned}
    \label{T^dep}
    T^{\text{load}}_{n,m}  &= \sum_{h, h' \in {\{\mathcal{H}(m)\backslash {h^m_0}\}}} x_{n,h'}(\tau-1)\cdot D^{swit}_{m}(h', h) \cdot  x_{n,h} \\
    & + \sum_{h \in {\{\mathcal{H}(m)\backslash {h^m_0}\}}}  x_{n,h}  \cdot D^{new}_{m}(h^m_0,h) \cdot x_{n,h^m_0}(\tau-1) \\
    & = \mathbf{x}^{\mathrm{T}}_{n,m}  \cdot D_m\cdot \mathbf{x}'_{n,m}(\tau-1),
    \end{aligned}
\end{equation}
where vector $\mathbf{x}_{n,m} = [x_{n,h^m_0}, x_{n,h^m_1}, \cdots, x_{n,h^m_{H(m)}}]$, 
and $D_m$ denotes the transit time of model $m$, including the possible switching or addition of the model state during the transit process from the previous window to the current window.

Since DNNs must be cached before providing services, user $u$'s request can be routed to BS $n$ with the cached submodel of $m_u$ for inference only if the model loading completion time is earlier than the request initiation time $s_u$:
\begin{equation}
    \label{con.Tdep}
    \sum_{n\in \mathcal{N}}  y_{n,u} \cdot T^{\text{load}}_{n,m_u} \le s_u, \forall u \in \mathcal{U}.
\end{equation}

\subsection{Problem Formulation}
\q{In the offline scenario where user requests can be predicted in advance, the purpose is to} jointly decide on model caching and request routing variables in a multi-edge collaborative scenario to
maximize the total inference precision of all user requests in each observation window, while the constraints associated with the decision variables are not violated. 
Let $p_h$ be the expected inference precision of the submodel $h$. The problem can be formulated as follows:
\begin{align}
    (\mathcal{P}) \max_{x, y} & \sum_{u\in \mathcal{U}} \sum_{n\in \mathcal{N}}\sum_{h\in \mathcal{H}(m_u)}  x_{n,h}\cdot  y_{n,u}\cdot p_h \label{eq:prob} \\
    \text{s.t.} \quad &(\ref{con.x}), (\ref{con.R}), (\ref{con.y}), (\ref{con.T}), (\ref{con.Tdep}), \nonumber \\
    \quad &  x_{n,h} \in \{0, 1\},  y_{n,u} \in \{0, 1\}, \forall n \in \mathcal{N}, u \in \mathcal{U}, h \in \mathcal{H},
\end{align}
where constraint (\ref{con.x}) ensures that each BS caches one submodel per model type;  
constraint (\ref{con.R}) bounds the memory usage; constraint (\ref{con.y}) restricts each user request to at most one BS; constraint (\ref{con.T}) keeps the end-to-end inference latency within user tolerance; and constraint (\ref{con.Tdep}) requires the DNN models to be cached before executing user requests at BSs.
For ease of reference, important notations are listed in Table~\ref{table:notation}.

\revi{\textbf{Limitations of our model.} In this paper, we focus on the joint optimization of dynamic model caching and request routing. To ensure theoretical tractability, we adopt several simplifying assumptions, including a contention-free communication model and a fault-free operating environment. Moreover, we treat the inference precision $p_h$ as a static expected value, disregarding input-specific variations. While these assumptions facilitate analytical modeling, addressing stochastic and practical factors in real-world environments represents an important direction for future research.}

\section{The CoCaR Approach}
\label{cocar approach}

In this section, we transform problem $\mathcal{P}$, formulated under the assumption in Sec.~\ref{request routing model} that user requests for the upcoming window are known in advance, into a linear programming problem. The assumption allows proactive downloading of the required models from the cloud to the edge server, followed by loading them into the server's memory. Then, we propose CoCaR to solve the 
transformed problem. Furthermore, we provide a detailed theoretical analysis and complexity evaluation for CoCaR, and extend it to practical scenarios.

\subsection{Problem Transformation}

The term $x_{n,h} \cdot y_{n,u}$ in problem $\mathcal{P}$ makes it a nonlinear integer programming problem, which is difficult to solve directly. To address this, we introduce the virtual binary integer variable $A_{n,u,h}$ to transform problem $\mathcal{P}$ into an equivalent integer linear programming problem $\mathcal{P}1$. We define
\begin{equation}
    \label{A}
    A_{n,u,h} = x_{n,h} \cdot y_{n,u}, \forall n \in \mathcal{N}, u \in \mathcal{U}, h \in \mathcal{H}(m_u).
\end{equation}

Then, we introduce three additional constraints to equivalently represent Eq. (\ref{A}):
\begin{equation}
    \begin{aligned}
        \label{con.Axy}
        A_{n,u,h} \le x_{n,h}, A_{n,u,h} \le y_{n,u}, \\
        x_{n,h} + y_{n,u} -1 \le A_{n,u,h}.
    \end{aligned}
\end{equation}

Subsequently, from Eq. (\ref{A}), we can observe that $A_{n,u,h}$ has practical significance, i.e., 
$A_{n,u,h}=1$ indicates that BS $n$ has cached submodel $h$ and user $u$'s request is routed to BS $n$.
Thus, we can eliminate $y$ to reduce the number of variables and simplify the problem.

Therefore, problem $\mathcal{P}1$ involves only the variables $x_{n,h}$ and $A_{n,u,h}$.
For each user $u \in \mathcal{U}$, there is
\begin{equation}
    \sum_{n\in \mathcal{N}}\sum_{h\in \mathcal{H}(m_u)} A_{n,u,h} = \sum_{n\in \mathcal{N}}\sum_{h\in \mathcal{H}(m_u)} x_{n,h} \cdot y_{n,u} = \sum_{n\in \mathcal{N}} y_{n,u},
\end{equation}
which holds due to Eq. (\ref{con.x}). Thus, the constraint ($\ref{con.y}$) can be equivalently transformed into:
\begin{equation}
    \label{con.A}
    \sum_{n\in \mathcal{N}}\sum_{h\in \mathcal{H}(m_u)} A_{n,u,h} \le 1, \forall u \in \mathcal{U}.
\end{equation}

Furthermore, for the $T^{\text{off}}_{u}$ part of the end-to-end inference latency in Eq.~(\ref{con.T}), let $T^{\text{off}}_{u} = \sum_{n\in \mathcal{N}}\sum_{h\in \mathcal{H}(m_u)}x_{n,h} \cdot y_{n,u} \cdot (\frac{d_u}{\varphi_{\hat{n}_u}} + \frac{d_u}{r_{\hat{n}_u,n}} + \lambda_{u,n}), \forall u \in \mathcal{U},$ which is equivalent to the original constraint. 
Consequently, problem $\mathcal{P}1$ can be formulated as:
\begin{align}
    (\mathcal{P}1) \max_{x, A} & \sum_{u\in \mathcal{U}} \sum_{n\in \mathcal{N}}\sum_{h\in \mathcal{H}(m_u)} A_{n,u,h}\cdot p_h \label{eq:prob2} \\
    \text{s.t.} \quad &(\ref{con.x}), (\ref{con.R}), (\ref{con.A}), \nonumber\\
    \quad & A_{n,u,h} \le x_{n,h}, \forall n \in \mathcal{N}, u \in \mathcal{U}, h \in \mathcal{H}(m_u), \label{con.A<x} \\
    \quad &\sum_{n\in \mathcal{N}}\sum_{h\in \mathcal{H}(m_u)}A_{n,u,h} \cdot ((\frac{d_u}{\varphi_{\hat{n}_u}} + \frac{d_u}{r_{\hat{n}_u,n}} + \lambda_{u,n}) \nonumber\\
    \quad &+ \frac{c_h\cdot d_u}{C_n}) \le ddl_u, \forall u \in \mathcal{U}, \label{con.Addl} \\
    \quad &\sum_{n\in \mathcal{N}}\sum_{h, h'\in \mathcal{H}(m_u)} A_{n,u,h} \cdot x_{n,h'}(\tau-1) \cdot D_{m_u}(h', h) \nonumber\\
    \quad &\le s_u, \forall u \in \mathcal{U}, \label{con.Asu} \\
    \quad & x_{n,h} \in \{0, 1\}, \forall n \in \mathcal{N}, h \in \mathcal{H}, \\
    \quad & A_{n,u,h} \in \{0, 1\},\forall n \in \mathcal{N}, u \in \mathcal{U}, h \in \mathcal{H}(m_u). 
\end{align}

However, problem $\mathcal{P}1$ is still difficult to solve in polynomial time due to its non-convexity. To address this, we relax the variables in $\mathcal{P}1$ and transform it into an LP problem:
\begin{align}
    (\mathcal{P}1\text{-LR}) \max_{x, A} &\sum_{u\in \mathcal{U}} \sum_{n\in \mathcal{N}}\sum_{h\in \mathcal{H}(m_u)} A_{n,u,h}\cdot p_h \label{eq:prob2-lr} \\
    \text{s.t.} \quad &(\ref{con.x}), (\ref{con.R}), (\ref{con.A}), (\ref{con.A<x}), (\ref{con.Addl}), (\ref{con.Asu}), \nonumber \\
    \quad & x_{n,h} \in [0, 1], \forall n \in \mathcal{N}, h \in \mathcal{H}, \\
    \quad & A_{n,u,h} \in [0, 1],\forall n \in \mathcal{N}, u \in \mathcal{U}, h \in \mathcal{H}(m_u). 
\end{align}

\begin{algorithm}[t]
    \caption{CoCaR Algorithm}
    \label{alg:1}
    \small
    \begin{algorithmic}[1]
        \REQUIRE $\{m_u, d_u, ddl_u, s_u\}, \tau, \tilde{x}_{n,h}(\tau-1)$, $\mathcal{N, M}$
        \ENSURE $\tilde{x}_{n,h}, \tilde{A}_{n,u,h}$, $\tilde{y}_{n,u}$
        \STATE Solve problem $\mathcal{P}1\text{-LR}$ and get the optimal fractional solution $x^{\dagger}_{n,h}, A^{\dagger}_{n,u,h}$. \label{alg:1:1}
        \FOR{$n \in \mathcal{N}, m \in \mathcal{M}$} \label{alg:1:for1}
            \STATE Sample $\tilde{\mathbf{x}}_{n,m}$ from the multinoulli distribution $\mathrm{Pr}[\tilde{\mathbf{x}}_{n,m} = \mathbb{I}(h)] = x^{\dagger}_{n,h},\ h\in \mathcal{H}(m)$. \label{alg:1:2}
            \STATE Denote the sample result by $\tilde{\mathbf{x}}_{n,m}=\mathbb{I}(\hat{h})$. \label{alg:1:3}
            \STATE Set $\tilde{x}_{n,\hat{h}} = 1,\ \tilde{x}_{n,h} = 0,\ h\in \mathcal{H}(m)\backslash\{\hat{h}\}$. \label{alg:1:4}
        \ENDFOR
        \FOR{$u \in \mathcal{U}$, $n \in \mathcal{N}$} \label{alg:1:for2}
            \FOR{$h \in \mathcal{H}(m_u)$} \label{alg:1:for3}
                \STATE Set $\tilde{\phi}_{n,u,h} = 1$ with probability $\frac{A^{\dagger}_{n,u,h}}{x^{\dagger}_{n,h}}$. \label{alg:1:5}
                \STATE Let $\tilde{A}_{n,u,h} = \tilde{x}_{n,h} \cdot \tilde{\phi}_{n,u,h}.$ \label{alg:1:6}
            \ENDFOR
            \STATE Let $\tilde{y}_{n,u}=\mathbf{1}(\sum_{h\in H(m_u)}\tilde{A}_{n,u,h}>0)$.\label{alg:1:7}
        \ENDFOR
        \RETURN $\tilde{x}_{n,h}, \tilde{A}_{n,u,h}$, $\tilde{y}_{n,u}$
    \end{algorithmic}
\end{algorithm}

\subsection{Approximation Algorithm}

We propose the CoCaR algorithm to solve the JDCR problem using random rounding and \q{linear programming}. The details are provided below and summarized in Alg. \ref{alg:1}.

First, by applying a linear programming solver \cite{lee2015} to solve problem $\mathcal{P}1\text{-LR}$, we can obtain the optimal fractional solutions $x^{\dagger}_{n,h}$ and $A^{\dagger}_{n,u,h}$ (Line \ref{alg:1:1}).
Then, we round $x^{\dagger}_{n,h}$ and $A^{\dagger}_{n,u,h}$ to obtain vector $\tilde{\mathbf{x}}_{n,m}$ and intermediate variable $\tilde{\phi}_{n,u,h}$,
where $\tilde{\mathbf{x}}_{n,m}$ contains the integer solutions $\tilde{x}_{n,h}$,
and $\tilde{\phi}_{n,u,h}$ indicates whether user $u$'s request attempts to route to submodel $h$ at BS $n$ for inference, serving to determine the integer solution $\tilde{A}_{n,u,h}$. Specifically, let the vector $\tilde{\mathbf{x}}_{n,m}$ be the indicator vector $\mathbb{I}(h)$ with probability $x^{\dagger}_{n,h}$ (Lines \ref{alg:1:2}-\ref{alg:1:4})
and the rounding probability of $\tilde{\phi}_{n,u,h}$ be $\phi^{\dagger}_{n,u,h}$ (Lines \ref{alg:1:5}-\ref{alg:1:6}):
\begin{equation}
    \begin{aligned}
    \label{P[x]}
    &\mathrm{Pr}[\tilde{\mathbf{x}}_{n,m} = \mathbb{I}(h)] = x^{\dagger}_{n,h}, \\
    &\mathrm{Pr}[\tilde{\phi}_{n,u,h} = 1] = \phi^{\dagger}_{n,u,h} = \frac{A^{\dagger}_{n,u,h}}{x^{\dagger}_{n,h}},
    \end{aligned}
\end{equation}
where $\mathbb{I}(h)$ is a one-hot vector of size $|\mathcal{H}(m)|$, with the $h$-th element being 1 and 0 otherwise, indicating that the submodel $h$ of model $m$ is cached at BS $n$, i.e., $\tilde{x}_{n,h}=1$. Finally, let
\begin{equation}
    \label{P[A_nuh]}
    \tilde{A}_{n,u,h} = \tilde{x}_{n,h} \cdot \tilde{\phi}_{n,u,h}.
\end{equation}
Thus, we can revert $\tilde{y}_{n,u}$ based on $\tilde{A}_{n,u,h}$ (Line {\ref{alg:1:7}}).

Then, we provide theoretical guarantees on the quality of the solutions returned by the CoCaR algorithm. 

\begin{lemma}
The solutions returned by CoCaR satisfy the constraints in problem $\mathcal{P}1$ in expectation.    
\end{lemma}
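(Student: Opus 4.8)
The plan is to reduce the lemma to two expectation identities and then invoke linearity of expectation together with the feasibility of the fractional optimum of $\mathcal{P}1\text{-LR}$. Concretely, I would first establish that the rounding is \emph{unbiased} at the level of individual variables, namely
\begin{equation}
\mathbb{E}[\tilde{x}_{n,h}] = x^{\dagger}_{n,h} \quad\text{and}\quad \mathbb{E}[\tilde{A}_{n,u,h}] = A^{\dagger}_{n,u,h}.
\end{equation}
The first identity is immediate from the multinoulli sampling in Line \ref{alg:1:2}: the event $\tilde{x}_{n,h}=1$ is exactly the event $\tilde{\mathbf{x}}_{n,m}=\mathbb{I}(h)$, which by Eq. (\ref{P[x]}) occurs with probability $x^{\dagger}_{n,h}$. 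For the second identity I would use the definition $\tilde{A}_{n,u,h}=\tilde{x}_{n,h}\cdot\tilde{\phi}_{n,u,h}$ in Eq. (\ref{P[A_nuh]}), the fact that the Bernoulli draw of $\tilde{\phi}_{n,u,h}$ is performed independently of the caching draw $\tilde{\mathbf{x}}_{n,m}$, and the rounding probability $\mathbb{E}[\tilde{\phi}_{n,u,h}]=A^{\dagger}_{n,u,h}/x^{\dagger}_{n,h}$ from Eq. (\ref{P[x]}); multiplying the two expectations then yields $A^{\dagger}_{n,u,h}$.

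Once these identities are in hand, I would verify each constraint of $\mathcal{P}1$ in turn. The caching-uniqueness constraint (\ref{con.x}) and the coupling constraint (\ref{con.A<x}) in fact hold pointwise---$\sum_{h\in\mathcal{H}(m)}\tilde{x}_{n,h}=1$ because the multinoulli picks exactly one submodel, and $\tilde{A}_{n,u,h}\le\tilde{x}_{n,h}$ because $\tilde{\phi}_{n,u,h}\in\{0,1\}$---so they are satisfied in expectation \emph{a fortiori}. The remaining constraints (\ref{con.R}), (\ref{con.A}), (\ref{con.Addl}), and (\ref{con.Asu}) are all linear in $(x,A)$ with deterministic coefficients (the sizes $r_h$ and capacities $R_n$, the per-request latency terms, and the previous-window caching $\tilde{x}_{n,h'}(\tau-1)$ together with $D_{m_u}(h',h)$, all of which are fixed inputs). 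Applying linearity of expectation and substituting the two identities above, the expected left-hand side of each such constraint equals the corresponding left-hand side evaluated at the fractional optimum $(x^{\dagger},A^{\dagger})$, which satisfies every constraint of $\mathcal{P}1\text{-LR}$ by the feasibility of the LP solution; hence each constraint is met in expectation.

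The main point requiring care is the product expectation $\mathbb{E}[\tilde{A}_{n,u,h}]=\mathbb{E}[\tilde{x}_{n,h}]\,\mathbb{E}[\tilde{\phi}_{n,u,h}]$, which rests on the independence of the two rounding stages and must be justified rather than assumed. A related subtlety is the degenerate case $x^{\dagger}_{n,h}=0$, where the rounding probability $A^{\dagger}_{n,u,h}/x^{\dagger}_{n,h}$ is undefined; here constraint (\ref{con.A<x}) of the LP forces $A^{\dagger}_{n,u,h}=0$, while $\tilde{x}_{n,h}=0$ almost surely, so $\tilde{A}_{n,u,h}=0=A^{\dagger}_{n,u,h}$ and the identity continues to hold by convention. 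Handling this boundary case cleanly, and confirming that the loading-time coefficients in (\ref{con.Asu}) are genuine constants fixed by the previous window rather than current decision variables, are the only places where the argument goes beyond a one-line substitution.
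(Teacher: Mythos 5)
Your proof is correct and takes essentially the same route as the paper's: constraints (\ref{con.x}) and (\ref{con.A<x}) are observed to hold pointwise, and the remaining constraints are handled via the unbiasedness of the two-stage rounding (using independence of the multinoulli draw and the Bernoulli draw, so that $\mathbb{E}[\tilde{A}_{n,u,h}] = A^{\dagger}_{n,u,h}$) together with linearity of expectation and feasibility of the fractional optimum of $\mathcal{P}1\text{-LR}$. Your explicit handling of the degenerate case $x^{\dagger}_{n,h}=0$ and of the loading-time coefficients being fixed by the previous window is a detail the paper's proof leaves implicit, but it does not alter the argument.
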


\begin{proof}
First, the inequality $\tilde{A}_{n,u,h} \le \tilde{x}_{n,h}$ holds according to Eq. (\ref{P[A_nuh]}), which satisfies constraint (\ref{con.A<x}).

Second, for integer solutions $\tilde{x}_{n,h}$, $\sum_{h\in \mathcal{H}(m)} \tilde{x}_{n,h} = \|\tilde{\mathbf{x}}_{n,m}\|_1 = 1$ holds.
Therefore, constraint (\ref{con.x}) can be satisfied, where $\|\tilde{\mathbf{x}}_{n,m}\|_1$ is the L1 norm of vector $\tilde{\mathbf{x}}_{n,m}$.

Next, the expected amount of memory consumed by model caching at BS $n, n\in \mathcal{N}$ is given by:
$\mathbb{E}[\sum_{h\in \mathcal{H}} \tilde{x}_{n,h} \cdot r_h] 
= \mathbb{E}[\sum_{m\in \mathcal{M}} \tilde{\mathbf{x}}_{n,m} \cdot \mathbf{r}_m] 
= \sum_{m\in \mathcal{M}}\sum_{h\in \mathcal{H}(m)} x^{\dagger}_{n,h} \cdot r_h \le R_n,$     
where vector $\mathbf{r}_m = [r_{h^m_0}, r_{h^m_1},\cdots, r_{h^m_{H(m)}}]$, and the last inequality holds due to constraint (\ref{con.R}).

Then, constraint (\ref{con.A}) is satisfied in expectation due to:
\begin{align}
    &\mathbb{E}[\sum_{n\in \mathcal{N}} \sum_{h\in \mathcal{H}(m_u)} \tilde{A}_{n,u,h}] 
    = \sum_{n\in \mathcal{N}} \mathbb{E}[\tilde{\mathbf{x}}_{n,m_u} \cdot \vec{\tilde{{\phi}}}_{n,u}] &\notag \\
    &= \sum_{n\in \mathcal{N}} \sum_{h\in \mathcal{H}(m_u)} \mathrm{Pr}[\tilde{\mathbf{x}}_{n,m_u} = \mathbb{I}(h)] \cdot \mathrm{Pr}[\tilde{\phi}_{n,u,h} = 1] \\
    &= \sum_{n\in \mathcal{N}} \sum_{h\in \mathcal{H}(m_u)} x^{\dagger}_{n,h} \cdot \frac{A^{\dagger}_{n,u,h}}{x^{\dagger}_{n,h}} 
    = \sum_{n\in \mathcal{N}} \sum_{h\in \mathcal{H}(m_u)} A^{\dagger}_{n,u,h} \le 1,  &\notag
\end{align}
where vector $\vec{\tilde{{\phi}}}_{n,u} = [\tilde{{\phi}}_{n,u,h^{m_u}_0}, \tilde{{\phi}}_{n,u,h^{m_u}_1}, \cdots, \tilde{{\phi}}_{n,u,h^{m_u}_{H(m_u)}}]$. 
The first equation holds due to vector operations,
and the third due to Eq. (\ref{P[x]}).
The inequality holds by constraint (\ref{con.A}).

Furthermore, the end-to-end inference latency perceived by user $u\in \mathcal{U}$ is expected to be:
\begin{align}
    &\mathbb{E}[\sum_{n\in \mathcal{N}} \sum_{h\in \mathcal{H}(m_u)} \tilde{A}_{n,u,h} \cdot ((\frac{d_u}{\varphi_{\hat{n}_u}} + \frac{d_u}{r_{\hat{n}_u,n}} + \lambda_{u,n}) + \frac{c_h\cdot d_u}{C_n})] & \notag \\
    &= \mathbb{E}[\sum_{n\in \mathcal{N}} \vec{\tilde{\phi}}_{n,u} \cdot (\tilde{\mathbf{x}}_{n,m_u} \odot \vec{\hat{\mathbb{T}}}_{n,u})] & \label{E[ddl]} \\
    &= \sum_{n\in \mathcal{N}} \sum_{h\in \mathcal{H}(m_u)} A^{\dagger}_{n,u,h} \cdot \hat{\mathbb{T}}_{n,u,h} \le ddl_u, & \notag
\end{align}
where $\hat{\mathbb{T}}_{n,u,h}=((\frac{d_u}{\varphi_{\hat{n}_u}} + \frac{d_u}{r_{\hat{n}_u,n}} + \lambda_{u,n}) + \frac{c_h\cdot d_u}{C_n})$,
vector $\vec{\hat{\mathbb{T}}}_{n,u} = [\hat{\mathbb{T}}_{n,u,h^{m_u}_0}, \hat{\mathbb{T}}_{n,u,h^{m_u}_1}, \cdots, \hat{\mathbb{T}}_{n,u,h^{m_u}_{H(m_u)}}]$ and $\odot$ represents the element-wise vector multiplication.
The second equation holds due to Eq. (\ref{P[x]}), and the inequality is by constraint (\ref{con.Addl}).

Finally, the model loading latency for user $u, u\in \mathcal{U}$ is expected to be:
\begin{align}
    &\mathbb{E}[\sum_{n\in \mathcal{N}}\sum_{h\in \mathcal{H}(m_u)}\sum_{h'\in \mathcal{H}(m_u)} \tilde{A}_{n,u,h} \cdot x_{n,h'}(\tau-1) \cdot D_{m_u}(h', h)] & \notag\\
    &= \mathbb{E}[\sum_{n\in \mathcal{N}} \vec{\tilde{\phi}}_{n,u} \cdot (\tilde{\mathbf{x}}_{n,m_u} \odot \vec{\hat{D}}_{n,m_u})] \\
    &= \sum_{n\in \mathcal{N}}\sum_{h, h'\in \mathcal{H}(m_u)} A^{\dagger}_{n,u,h} \cdot x_{n,h'}(\tau-1) \cdot \hat{D}_{n,h} \le s_u, & \notag
\end{align}
where $\hat{D}_{n,h} = \sum_{h'\in \mathcal{H}(m_u)} x_{n,h'}(\tau-1) \cdot D_{m_u}(h', h)$, and the vector $\vec{\hat{D}}_{n,m_u} = [\hat{D}_{n,h^{m_u}_0}, \hat{D}_{n,h^{m_u}_1}, \cdots, \hat{D}_{n,h^{m_u}_{H(m_u)}}]$. The last inequality holds due to constraint (\ref{con.Asu}). 
\end{proof}

\begin{lemma}
The objective value obtained by the CoCaR is equal to that of the optimal fractional solution in expectation.    
\end{lemma}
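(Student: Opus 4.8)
The plan is to establish the claim directly via linearity of expectation, which reduces the statement about the full objective to a per-term computation of $\mathbb{E}[\tilde{A}_{n,u,h}]$. Since the objective $\sum_{u\in\mathcal{U}}\sum_{n\in\mathcal{N}}\sum_{h\in\mathcal{H}(m_u)} \tilde{A}_{n,u,h}\cdot p_h$ is a linear combination of the rounded variables with deterministic coefficients $p_h$, it suffices to show $\mathbb{E}[\tilde{A}_{n,u,h}] = A^{\dagger}_{n,u,h}$ for every triple $(n,u,h)$; summing against $p_h$ then yields exactly $\mathbb{P}^{\dagger} = \sum_{u\in\mathcal{U}}\sum_{n\in\mathcal{N}}\sum_{h\in\mathcal{H}(m_u)} A^{\dagger}_{n,u,h}\cdot p_h$, the optimal fractional objective value.

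First I would compute the single-term expectation using the factorization $\tilde{A}_{n,u,h} = \tilde{x}_{n,h}\cdot\tilde{\phi}_{n,u,h}$ from Eq.~(\ref{P[A_nuh]}). The two factors are produced in separate, independent sampling stages of Alg.~\ref{alg:1}: the caching indicators $\tilde{x}_{n,h}$ come from the multinoulli draw in Lines~\ref{alg:1:2}--\ref{alg:1:4}, while the routing attempt $\tilde{\phi}_{n,u,h}$ is an independent Bernoulli draw in Line~\ref{alg:1:5}. By independence, $\mathbb{E}[\tilde{A}_{n,u,h}] = \mathbb{E}[\tilde{x}_{n,h}]\cdot\mathbb{E}[\tilde{\phi}_{n,u,h}]$. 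The marginal of the multinoulli gives $\mathbb{E}[\tilde{x}_{n,h}] = \mathrm{Pr}[\tilde{\mathbf{x}}_{n,m} = \mathbb{I}(h)] = x^{\dagger}_{n,h}$, and Eq.~(\ref{P[x]}) gives $\mathbb{E}[\tilde{\phi}_{n,u,h}] = A^{\dagger}_{n,u,h}/x^{\dagger}_{n,h}$; multiplying cancels $x^{\dagger}_{n,h}$ and leaves $A^{\dagger}_{n,u,h}$, as required.

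I expect the main technical care to lie in two places rather than in the algebra itself. The first is rigorously invoking independence: I would note that the two \textbf{for} loops in Alg.~\ref{alg:1} use fresh randomness, so $\tilde{x}_{n,h}$ and $\tilde{\phi}_{n,u,h}$ are generated independently, which legitimizes splitting the expectation of their product. The second, and the genuine edge case, is the division by $x^{\dagger}_{n,h}$ in the rounding probability when $x^{\dagger}_{n,h}=0$. Here I would appeal to constraint~(\ref{con.A<x}), namely $A^{\dagger}_{n,u,h}\le x^{\dagger}_{n,h}$, so $x^{\dagger}_{n,h}=0$ forces $A^{\dagger}_{n,u,h}=0$; moreover $\tilde{x}_{n,h}=0$ holds almost surely in that case, hence $\tilde{A}_{n,u,h}=0$ deterministically and the identity $\mathbb{E}[\tilde{A}_{n,u,h}]=0=A^{\dagger}_{n,u,h}$ still holds under the convention that the degenerate ratio is taken as $0$. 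With these two points settled, linearity of expectation over all triples $(n,u,h)$ closes the argument.
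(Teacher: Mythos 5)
Your proof is correct and follows essentially the same route as the paper's: linearity of expectation plus independence of the two rounding stages, so that $\mathbb{E}[\tilde{A}_{n,u,h}] = \mathbb{E}[\tilde{x}_{n,h}]\cdot\mathbb{E}[\tilde{\phi}_{n,u,h}] = x^{\dagger}_{n,h}\cdot\frac{A^{\dagger}_{n,u,h}}{x^{\dagger}_{n,h}} = A^{\dagger}_{n,u,h}$, and summing against $p_h$ recovers $\mathbb{P}^{\dagger}$ (the paper writes this same computation in vector form via $\vec{\tilde{\phi}}_{n,u} \cdot (\tilde{\mathbf{x}}_{n,m_u} \odot \vec{P}_{m_u})$). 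Your explicit handling of the degenerate case $x^{\dagger}_{n,h}=0$ through constraint~(\ref{con.A<x}) is a small rigor improvement over the paper, which passes over that division silently, but it does not constitute a different argument.
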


\begin{proof}
First, let $\mathbb{P}^{\dagger}$ be the objective value in the optimal fractional solution, $\mathbb{P}^{\dagger}=\sum_{u\in \mathcal{U}} \sum_{n\in \mathcal{N}} \sum_{h\in \mathcal{H}(m_u)} A^{\dagger}_{n,u,h} \cdot p_h$.
In expectation, the total inference precision of user requests that the CoCaR algorithm can obtain is:
\begin{equation}
    \begin{aligned}
        \label{E[P]}
        \mathbb{E}[\tilde{\mathbb{P}}] 
        &= \mathbb{E}[ \sum_{u\in \mathcal{U}} \sum_{n\in \mathcal{N}}\sum_{h\in \mathcal{H}(m_u)} \tilde{A}_{n,u,h}\cdot p_h] \\
        &=  \sum_{u\in \mathcal{U}} \sum_{n\in \mathcal{N}} \mathbb{E}[\vec{\tilde{\phi}}_{n,u} \cdot (\tilde{\mathbf{x}}_{n,m_u} \odot \vec{P}_{m_u})]\\
        &=  \sum_{u\in \mathcal{U}} \sum_{n\in \mathcal{N}} \sum_{h\in \mathcal{H}(m_u)} A^{\dagger}_{n,u,h} \cdot p_h = \mathbb{P}^{\dagger},
    \end{aligned}
\end{equation}
where vector $\vec{P}_{m_u} = [p^{m_u}_0, p^{m_u}_1, \cdots, p^{m_u}_{H(m_u)}]$. 
\end{proof}

The above lemmas show that CoCaR has theoretical guarantees in expectation.
However, in practice, random rounding may lead to constraint violations. 
Next, we analyze in detail using the Chernoff Bound theorem \cite{chernoff}.

\begin{definition}[\bf Chernoff Bound \cite{chernoff}]
Given $I$ independent random variables $z_1, z_2, \dots, z_I$, where for all $z_i$ obey multinoulli distribution, and $z_i \in [0, 1]$. 
Let $\mu = \mathbb{E}[\sum_{i=1}^I z_i]$. Then, it holds that 
$\mathrm{Pr}[\sum_{i=1}^I z_i \ge (1+\delta)\mu] \le exp^{\frac{-\delta^2 \mu}{2+\delta}},\  \forall\ \delta>0$ and
$\mathrm{Pr}[\sum_{i=1}^I z_i \le (1-\delta)\mu] \le exp^{\frac{-\delta^2 \mu}{2}},\ \forall\ 0<\delta<1$.
\end{definition}

\begin{theorem}
There is a high probability that the solution returned by the CoCaR algorithm has at least 
$(1-\sqrt{\frac{4\ln|\mathcal{H}|}{\mathbb{P}^{\dagger}}})^2$ 
approximation ratio with the optimal integer solution,
where $\mathbb{P}^{\dagger}$ is the objective value obtained from the optimal fractional solution,
under the assumption $\mathbb{P}^{\dagger} \ge 4\ln|\mathcal{H}|$.
\end{theorem}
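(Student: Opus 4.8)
The plan is to combine Lemma 2 with the lower-tail Chernoff bound from the Definition, exploiting the fact that the relaxation $\mathcal{P}1\text{-LR}$ upper-bounds the integer optimum. Writing $\mathbb{P}^{*}$ for the objective value of the optimal integer solution, LP relaxation gives $\mathbb{P}^{\dagger} \ge \mathbb{P}^{*}$, so it suffices to show that the rounded objective $\tilde{\mathbb{P}}$ is at least $(1-\delta)^2\,\mathbb{P}^{\dagger}$ with high probability, where I set $\delta = \sqrt{4\ln|\mathcal{H}|/\mathbb{P}^{\dagger}}$. The assumption $\mathbb{P}^{\dagger}\ge 4\ln|\mathcal{H}|$ guarantees $\delta \in (0,1]$, so the claimed ratio is non-vacuous and the lower-tail bound (valid for $0<\delta<1$) applies.

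The key observation is that CoCaR rounds in two conditionally unbiased stages, and I would extract one factor of $(1-\delta)$ from each. I first introduce the intermediate quantity
\[
    \hat{\mathbb{P}} = \sum_{u\in\mathcal{U}}\sum_{n\in\mathcal{N}}\sum_{h\in\mathcal{H}(m_u)} \tilde{x}_{n,h}\,\phi^{\dagger}_{n,u,h}\,p_h,
\]
which uses the rounded caching vector $\tilde{x}$ but the fractional routing probabilities $\phi^{\dagger}$. Since $\mathbb{E}[\tilde{x}_{n,h}] = x^{\dagger}_{n,h}$, the same computation as in Lemma 2 yields $\mathbb{E}[\hat{\mathbb{P}}] = \mathbb{P}^{\dagger}$, while conditioning on $\tilde{\mathbf{x}}$ gives $\mathbb{E}[\tilde{\mathbb{P}}\mid\tilde{\mathbf{x}}] = \hat{\mathbb{P}}$ because the $\tilde{\phi}_{n,u,h}$ are independent Bernoulli variables with means $\phi^{\dagger}_{n,u,h}$.

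I then apply the Chernoff lower tail to each stage. For the routing stage, conditioned on $\tilde{\mathbf{x}}$, the unique cached submodel of type $m_u$ at BS $n$ is some $\hat h_n$ with $\tilde{x}_{n,\hat h_n}=1$, so $\tilde{\mathbb{P}} = \sum_{n,u}\tilde{\phi}_{n,u,\hat h_n}\,p_{\hat h_n}$ is a sum of independent variables each lying in $[0,1]$ (as $p_h\le 1$ and distinct $(n,u)$ use distinct $\tilde\phi$ indices); the bound gives $\mathrm{Pr}[\tilde{\mathbb{P}} \le (1-\delta)\hat{\mathbb{P}}\mid\tilde{\mathbf{x}}] \le \exp(-\delta^2\hat{\mathbb{P}}/2)$. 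For the caching stage I apply the same bound to $\hat{\mathbb{P}}$ (mean $\mathbb{P}^{\dagger}$) to obtain $\mathrm{Pr}[\hat{\mathbb{P}} \le (1-\delta)\mathbb{P}^{\dagger}] \le \exp(-\delta^2\mathbb{P}^{\dagger}/2) = |\mathcal{H}|^{-2}$. On the good event $\hat{\mathbb{P}}\ge(1-\delta)\mathbb{P}^{\dagger}$, the conditional routing failure probability is at most $\exp(-\delta^2(1-\delta)\mathbb{P}^{\dagger}/2)=|\mathcal{H}|^{-2(1-\delta)}$. A union bound over the two stages shows that, with probability at least $1 - |\mathcal{H}|^{-2} - |\mathcal{H}|^{-2(1-\delta)}$, both factors hold simultaneously, whence $\tilde{\mathbb{P}} \ge (1-\delta)^2\mathbb{P}^{\dagger} \ge (1-\delta)^2\mathbb{P}^{*}$, which is the claimed ratio.

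The delicate point, which I expect to be the main obstacle, is the independence required in the caching stage: $\hat{\mathbb{P}}$ decomposes over the multinoulli draws $\tilde{\mathbf{x}}_{n,m}$, which are independent across $(n,m)$, but each per-group contribution $\sum_{u:m_u=m}\phi^{\dagger}_{n,u,\hat h_n}p_{\hat h_n}$ need not lie in $[0,1]$, whereas the finer per-$(n,u,h)$ summands do lie in $[0,1]$ yet are coupled within a group through the shared one-hot vector. I would resolve this by noting that the indicators $\{\tilde{x}_{n,h}\}$ from a single multinoulli sample are negatively associated, and that independence across $(n,m)$ with the fixed nonnegative weights $\phi^{\dagger}_{n,u,h}p_h$ preserves negative association of the summands $\tilde{x}_{n,h}\phi^{\dagger}_{n,u,h}p_h$; since the Chernoff lower tail holds verbatim for sums of negatively associated $[0,1]$ variables, the Stage-1 estimate goes through with the stated constant. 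The remaining steps — verifying $\mathbb{P}^{\dagger}\ge\mathbb{P}^{*}$ from relaxation, checking $\delta\le 1$, and the probability bookkeeping — are routine.
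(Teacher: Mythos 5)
Your proposal follows essentially the same route as the paper's own proof: the same two-stage rounding decomposition (your $\hat{\mathbb{P}}$ is the paper's $\mathbb{P}'$), the same choice $\delta_1=\delta_2=\sqrt{4\ln|\mathcal{H}|/\mathbb{P}^{\dagger}}$, the same conditioning of the routing-stage tail on the caching-stage success event, the same union-bound bookkeeping, and the same final step via $\tilde{\mathbb{P}}^{*}\le\mathbb{P}^{\dagger}$. Your routing stage is sound: conditioned on $\tilde{\mathbf{x}}$, the terms $\tilde{\phi}_{n,u,\hat h_n}p_{\hat h_n}$ are independent across distinct $(n,u)$ and lie in $[0,1]$, exactly as in the paper's Stage II.

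The genuine gap is in your repair of the caching stage, the step you yourself flag as the main obstacle. The collection $\{\tilde{x}_{n,h}\,\phi^{\dagger}_{n,u,h}\,p_h\}_{(n,u,h)}$ is \emph{not} negatively associated: for a fixed pair $(n,h)$ and two distinct users $u\ne u'$ with $m_u=m_{u'}$, the two summands are positive multiples of the \emph{same} Bernoulli indicator $\tilde{x}_{n,h}$, so
$\mathrm{Cov}\bigl[\tilde{x}_{n,h}\phi^{\dagger}_{n,u,h}p_h,\ \tilde{x}_{n,h}\phi^{\dagger}_{n,u',h}p_h\bigr]=\phi^{\dagger}_{n,u,h}\,\phi^{\dagger}_{n,u',h}\,p_h^2\,x^{\dagger}_{n,h}(1-x^{\dagger}_{n,h})>0$
whenever $x^{\dagger}_{n,h}\in(0,1)$, which violates the defining requirement of negative association (nonpositive covariance of monotone functions over disjoint index sets). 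The difficulty is structural: at the coarser $(n,h)$ level, the variables $Y_{n,h}=\tilde{x}_{n,h}\sum_{u:m_u=m}\phi^{\dagger}_{n,u,h}p_h$ \emph{are} negatively associated (scaled multinoulli indicators, independent across $(n,m)$ groups), but they are no longer bounded by $1$ --- a group weight can be as large as the number of users requesting model $m$; at the finer $(n,u,h)$ level the summands are bounded by $1$ but, as shown, positively correlated. Neither level satisfies both hypotheses of the Chernoff bound simultaneously, so the constant $4$ in $\delta_1$ cannot be extracted by your argument. It is worth noting that the paper's own proof glosses over exactly the same point: it applies its Definition 1 to the independent group sums $v_{n,m}=\tilde{\mathbf{x}}_{n,m}\cdot\sum_{u:m_u=m}\vec{P}_u$ without verifying that they lie in $[0,1]$. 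A rigorous treatment would either rescale by the maximum group weight $b=\max_{n,m}\max_{h}\sum_{u:m_u=m}\phi^{\dagger}_{n,u,h}p_h$, degrading the exponent to $-\delta^2\mathbb{P}^{\dagger}/(2b)$ and hence altering the stated constant, or impose an explicit assumption bounding the per-submodel fractional user load; your instinct about where the proof is fragile was correct, but negative association is not the fix.
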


\begin{proof}
According to Eq.~(\ref{P[x]}), $\mathbb{P}^{\dagger}$ is equivalent to
$\mathbb{P}^{\dagger} =  \sum_{u\in \mathcal{U}} \sum_{n\in \mathcal{N}} \sum_{h\in \mathcal{H}(m_u)} x^{\dagger}_{n,h} \cdot \phi^{\dagger}_{n,u,h} \cdot p_h$.
Next, to satisfy the independence requirement for the Chernoff Bound, we provide a two-stage proof.

In stage I, we first keep $\phi^{\dagger}_{n,u,h}$ in $\mathbb{P}^{\dagger}$ unchanged and round $x^{\dagger}_{n,h}$ to $\tilde{x}_{n,h}$ to obtain $\mathbb{P}'$:
\begin{equation}
    \begin{aligned}
        \mathbb{P}'&=  \sum_{u\in \mathcal{U}} \sum_{n\in \mathcal{N}} \sum_{h\in \mathcal{H}(m_u)} \tilde{x}_{n,h} \cdot \phi^{\dagger}_{n,u,h} \cdot p_h \\
        &=  \sum_{n\in \mathcal{N}} \sum_{m\in \mathcal{M}} (\tilde{\mathbf{x}}_{n,m} \cdot \sum_{u\in \{u|m_u=m\}} \vec{P}_u), 
    \end{aligned}
\end{equation}
where $\vec{P}_u = [\frac{A^{\dagger}_{n,u,h^{m_u}_0}}{x^{\dagger}_{n,h^{m_u}_0}} \cdot p_{h^{m_u}_0}, \cdots, \frac{A^{\dagger}_{n,u,h^{m_u}_{H(m_u)}}}{x^{\dagger}_{n,h^{m_u}_{H(m_u)}}}\cdot p_{h^{m_u}_{H(m_u)}}]$.

Denote $v_{n,m} = \tilde{\mathbf{x}}_{n,m} \cdot\sum_{u\in \{u|m_u=m\}} \vec{P}_u$, 
the determinism of $\sum_{u\in \{u|m_u=m\}} \vec{P}_u$ and the independence of $\tilde{\mathbf{x}}_{n,m}$ ensure that $v_{n,m}$ are independent.
Clearly, we have $\mathbb{E}[\mathbb{P}'] = \mathbb{P}^{\dagger}$, and by the Chernoff Bound theorem, when $\delta_1 \in [0, 1]$, we have
\begin{equation}
    \label{q1}
    \mathrm{Pr}[\mathbb{P}' \le (1-\delta_1)\mathbb{P}^{\dagger}] = q_1 \le exp(\frac{-\delta_1^2 \mathbb{P}^{\dagger}}{2}).
\end{equation}

Next, find a $\delta_1$ value for the right side of Eq. (\ref{q1}) to make it very small. Specifically, we require,
\begin{equation}
    \label{delta1}
    exp(\frac{-\delta_1^2 \mathbb{P}^{\dagger}}{2}) \le \frac{1}{|\mathcal{H}|^2}.
\end{equation}
This means that as the number of submodels increases, the probability bound converges to zero quickly.
Eq. (\ref{delta1}) holds when $\delta_1 \ge \sqrt{\frac{4\ln|\mathcal{H}|}{\mathbb{P}^{\dagger}}}$ 
, which is true if we pick
$\delta_1 = \sqrt{\frac{4\ln|\mathcal{H}|}{\mathbb{P}^{\dagger}}}$.

In stage II, building upon the rounded result $\tilde{x}_{n,h}$ in stage I, we round $\phi^{\dagger}_{n,u,h}$ to $\tilde{\phi}_{n,u,h}$, resulting in the equation $\tilde{\mathbb{P}}$:  
$ \tilde{\mathbb{P}} =  \sum_{u\in \mathcal{U}} \sum_{n\in \mathcal{N}} \sum_{h\in \mathcal{H}(m_u)} \tilde{x}_{n,h} \cdot \tilde{\phi}_{n,u,h} \cdot p_h$.
Since $\tilde{x}_{n,h} \cdot p_h$ are deterministic and $\tilde{\phi}_{n,u,h}$ are independent,
it follows that: $\mathbb{E}[\tilde{\mathbb{P}}] =  \sum_{u\in \mathcal{U}} \sum_{n\in \mathcal{N}} \sum_{h\in \mathcal{H}(m_u)} \tilde{x}_{n,h} \cdot \mathrm{Pr}[\tilde{\phi}_{n,u,h}=1] \cdot p_h
= \mathbb{P}'$.
By the Chernoff Bound theorem, when $\delta_2 \in [0, 1]$, we have:
\begin{equation}
    \label{c}
    \mathrm{Pr}[\tilde{\mathbb{P}} \le (1-\delta_2)\mathbb{P}'] = q_2 \le exp(\frac{-\delta_2^2 \mathbb{P}'}{2}).
\end{equation}

Since stage I and stage II are independent, when the condition $\mathbb{P}' \ge (1-\delta_1)\mathbb{P}^{\dagger}$ in Eq. (\ref{q1}) is satisfied, we have:
\begin{equation}
    \begin{aligned}
        \label{q2}
        q_2 &= \mathrm{Pr}[\tilde{\mathbb{P}} \le (1-\delta_2)\mathbb{P}' | \mathbb{P}' \ge (1-\delta_1)\mathbb{P}^{\dagger}]\\
        &\le exp(\frac{-\delta_2^2 \mathbb{P}'}{2}) \le exp(\frac{-\delta_2^2 (1-\delta_1) \mathbb{P}^{\dagger}}{2}).
    \end{aligned}
\end{equation}

Next, a value of $\delta_2$ is specified to make the right side of Eq. (\ref{q2}) become very small.  Specifically, we require:
\begin{equation}
    \label{delta2}
    exp(\frac{-\delta_2^2 (1-\delta_1) \mathbb{P}^{\dagger}}{2}) \le \frac{1}{|\mathcal{H}|^{2(1-\delta_1)}}.
\end{equation}
Eq. (\ref{delta2}) holds when $\delta_2$ satisfies $\delta_2 \ge \sqrt{\frac{4\ln|\mathcal{H}|}{\mathbb{P}^{\dagger}}}$. We select $\delta_2 = \sqrt{\frac{4\ln|\mathcal{H}|}{\mathbb{P}^{\dagger}}}$ to make the inequality hold.

Under the conditions specified in (\ref{delta1}) and (\ref{delta2}), by combining (\ref{q1}) and (\ref{c}), we obtain:
\begin{align}
    &\mathrm{Pr}[\tilde{\mathbb{P}} \ge (1-\delta_1)(1-\delta_2)\mathbb{P}^{\dagger}] \notag \\
    &\ge  \mathrm{Pr}[\mathbb{P}' \ge (1-\delta_1)\mathbb{P}^{\dagger}] \mathrm{Pr}[\tilde{\mathbb{P}} \ge (1-\delta_2)\mathbb{P}' | \mathbb{P}' \ge (1-\delta_1)\mathbb{P}^{\dagger}] \notag\\
    &= (1-q_1)(1-q_2) 
    \ge (1-\frac{1}{|\mathcal{H}|^2})(1-\frac{1}{|\mathcal{H}|^{2(1-\delta_1)}}). 
\end{align}
Since $\tilde{\mathbb{P}}^* \le \mathbb{P}^{\dagger}$, where $\tilde{\mathbb{P}}^*$ is the optimal integer solution, it follows that:
$\mathrm{Pr}[\tilde{\mathbb{P}} \ge (1-\delta_1)(1-\delta_2)\tilde{\mathbb{P}}^*] 
\ge \mathrm{Pr}[\tilde{\mathbb{P}} \ge (1-\delta_1)(1-\delta_2)\mathbb{P}^{\dagger}] 
\ge (1-\frac{1}{|\mathcal{H}|^2})(1-\frac{1}{|\mathcal{H}|^{2(1-\delta_1)}})$. 

In practice, as the number of user requests, BSs, and models increases, \(\mathbb{P}^{\dagger}\) related to \(u \in \mathcal{U}, n \in \mathcal{N}, h \in \mathcal{H}\) will significantly exceed \(4\ln{|\mathcal{H}|}\) (\(\mathbb{P}^{\dagger} \gg 4\ln|\mathcal{H}|\)). Therefore, the objective value obtained by the CoCaR algorithm has an approximation ratio of at least $(1-\delta_1)(1-\delta_2) = (1-\sqrt{\frac{4\ln|\mathcal{H}|}{\mathbb{P}^{\dagger}}})^2$ to the optimal integer solution $\tilde{\mathbb{P}}^*$ with a high probability. 
\end{proof}

\begin{theorem}
For BS $n \in \mathcal{N}$, the memory consumption of the cached model returned by the CoCaR algorithm has a high probability of not exceeding its memory capacity by more than a factor of $(\sqrt{\frac{2\ln|\mathcal{H}|}{\zeta^{\dagger}_n}} + \frac{1}{\sqrt{2}})^2 + \frac{1}{2}$, where $\zeta^{\dagger}_n$ is the memory consumption of BS $n$ in the optimal fractional solution, under the assumption $\zeta^{\dagger}_n \ge \ln|\mathcal{H}|$.
\end{theorem}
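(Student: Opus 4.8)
The plan is to replay the Chernoff-bound machinery of the preceding approximation-ratio theorem, but on the memory constraint~(\ref{con.R}) rather than on the objective. First I would write the (random) memory used at BS~$n$ as $\tilde{\zeta}_n=\sum_{h\in\mathcal{H}}\tilde{x}_{n,h}\,r_h=\sum_{m\in\mathcal{M}}Z_{n,m}$, where $Z_{n,m}=\sum_{h\in\mathcal{H}(m)}\tilde{x}_{n,h}\,r_h$ is the space taken by the unique submodel that the multinoulli sampling of Lines~\ref{alg:1:2}--\ref{alg:1:4} draws for model type~$m$. Because that sampling is performed independently across the pairs $(n,m)$, the family $\{Z_{n,m}\}_{m\in\mathcal{M}}$ is mutually independent, and by the same expectation computation already carried out in Lemma~1 we have $\mathbb{E}[\tilde{\zeta}_n]=\sum_{m\in\mathcal{M}}\sum_{h\in\mathcal{H}(m)}x^{\dagger}_{n,h}\,r_h=\zeta^{\dagger}_n\le R_n$.

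The step that needs the most care---and the main obstacle---is that the Chernoff Bound in the Definition requires each summand to lie in $[0,1]$, whereas $Z_{n,m}$ is a raw submodel size. I would handle this by working with normalized sizes (equivalently, taking $r_h\le 1$), so that each $Z_{n,m}\in[0,1]$; this is consistent because at most one submodel per type is cached and the empty submodel $h^m_0$ contributes $0$. After normalization $\tilde{\zeta}_n$ is a sum of $|\mathcal{M}|\le|\mathcal{H}|$ independent $[0,1]$ variables with mean $\mu:=\zeta^{\dagger}_n$, so the upper-tail estimate $\mathrm{Pr}[\tilde{\zeta}_n\ge(1+\delta)\mu]\le\exp(-\frac{\delta^2\mu}{2+\delta})$ holds for every $\delta>0$.

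Finally I would pick $\delta$ to force the right-hand side below $1/|\mathcal{H}|^2$, i.e.\ to make $\frac{\delta^2\mu}{2+\delta}\ge 2\ln|\mathcal{H}|$, exactly as for the objective. This is the quadratic inequality $\mu\delta^2-2\ln|\mathcal{H}|\,\delta-4\ln|\mathcal{H}|\ge 0$; bounding its positive root through $\sqrt{\ln^2|\mathcal{H}|+4\mu\ln|\mathcal{H}|}\le\ln|\mathcal{H}|+2\sqrt{\mu\ln|\mathcal{H}|}$ yields the admissible choice $\delta_0=\frac{2\ln|\mathcal{H}|}{\zeta^{\dagger}_n}+2\sqrt{\frac{\ln|\mathcal{H}|}{\zeta^{\dagger}_n}}$, which the hypothesis $\zeta^{\dagger}_n\ge\ln|\mathcal{H}|$ keeps of constant order. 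A one-line expansion of the square then verifies the identity $1+\delta_0=(\sqrt{\frac{2\ln|\mathcal{H}|}{\zeta^{\dagger}_n}}+\frac{1}{\sqrt2})^2+\frac12$, so with probability at least $1-1/|\mathcal{H}|^2$ we obtain $\tilde{\zeta}_n\le(1+\delta_0)\zeta^{\dagger}_n\le(1+\delta_0)R_n$, which is precisely the stated overflow factor. Everything past the normalization is the same Chernoff-plus-quadratic calculation already used for the precision bound, so I expect no further difficulty there.
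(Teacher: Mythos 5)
Your proposal is correct and follows essentially the same route as the paper's own proof: decompose the memory load at BS $n$ into the independent per-model-type contributions $\tilde{\mathbf{x}}_{n,m}\cdot\mathbf{r}_m$, normalize them into $[0,1]$, apply the upper-tail Chernoff bound with target $1/|\mathcal{H}|^2$, and solve the resulting quadratic in $\delta$ to arrive at the same choice $\delta=\frac{2\ln|\mathcal{H}|}{\zeta^{\dagger}_n}+2\sqrt{\frac{\ln|\mathcal{H}|}{\zeta^{\dagger}_n}}$, whose expansion gives the stated factor. Your explicit bounding of the positive root via $\sqrt{x+y}\le\sqrt{x}+\sqrt{y}$ is a small added touch of rigor the paper leaves implicit, but the argument is otherwise the same.
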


\begin{proof}
According to Lemma 1,
$\mathbb{E}[\sum_{h\in \mathcal{H}} \tilde{x}_{n,h} \cdot r_h] 
= \mathbb{E}[\sum_{m\in \mathcal{M}} \tilde{\mathbf{x}}_{n,m} \cdot \mathbf{r}_m]
= \sum_{m\in \mathcal{M}}\sum_{h\in \mathcal{H}(m)} x^{\dagger}_{n,h} \cdot r_h$,
where $\tilde{\mathbf{x}}_{n,m} \cdot \mathbf{r}_m$
are independent random variables
and can be normalized into values within [0,1].
Thus, we apply the Chernoff Bound theorem to show that for any $\delta > 0$:
$\mathrm{Pr}[\sum_{h\in \mathcal{H}} \tilde{x}_{n,h} \cdot r_h \ge (1+\delta) \sum_{m\in \mathcal{M}}\sum_{h\in \mathcal{H}(m)} x^{\dagger}_{n,h} \cdot r_h]
\le e^{\frac{-\delta^2 \sum_{m\in \mathcal{M}}\sum_{h\in \mathcal{H}(m)} x^{\dagger}_{n,h} \cdot r_h}{2+\delta}}$.

Denote $\zeta^{\dagger}_n = \sum_{m\in \mathcal{M}}\sum_{h\in \mathcal{H}(m)} x^{\dagger}_{n,h} \cdot r_h$. Since $\zeta^{\dagger}_n \le R_n$,  it follows that:
$\mathrm{Pr}[\sum_{h\in \mathcal{H}} \tilde{x}_{n,h} \cdot r_h \ge (1+\delta) R_n] \le e^{\frac{-\delta^2 \zeta^{\dagger}_n}{2+\delta}}$.
Then, similar to the proof in Theorem 1, to make the right side of the inequality small, i.e., $e^{\frac{-\delta^2 \zeta^{\dagger}_n}{2+\delta}} \le \frac{1}{|\mathcal{H}|^2}$, $\delta$ must satisfy $\delta \ge \frac{\ln|\mathcal{H}|}{\zeta^{\dagger}_n} + \sqrt{\frac{\ln^2 |\mathcal{H}|}{\zeta^{\dagger 2}_n} + \frac{4\ln|\mathcal{H}|}{\zeta^{\dagger}_n}}$.

We select $\delta = \frac{2\ln|\mathcal{H}|}{\zeta^{\dagger}_n} + 2\sqrt{\frac{\ln|\mathcal{H}|}{\zeta^{\dagger}_n}}$. Since ${\zeta^{\dagger}_n} \ge \ln|\mathcal{H}|$ holds in practice, with a high probability, the memory capacity of BS $n$ will not exceed by more than a factor of $1+\delta = (\sqrt{\frac{2\ln|\mathcal{H}|}{\zeta^{\dagger}_n}} + \frac{1}{\sqrt{2}})^2 + \frac{1}{2}$. 
\end{proof}

By similar proofs, we can prove the following three theorems about constraints (\ref{con.A}), (\ref{con.Addl}), and (\ref{con.Asu}).

\begin{theorem}
For user $u\in \mathcal{U}$, there is a high probability that $\sum_{n\in \mathcal{N}} \sum_{h\in \mathcal{H}(m_u)} \tilde{A}_{n,u,h}$ is no greater than
$(\sqrt{\frac{2\ln|\mathcal{H}|}{\eta^{\dagger}}} + \frac{1}{\sqrt{2}})^2 + \frac{1}{2}$,
where, $\eta^{\dagger} = \sum_{n\in \mathcal{N}} \sum_{h\in \mathcal{H}(m_u)} A^{\dagger}_{n,u,h}$.
\end{theorem}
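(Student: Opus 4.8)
The plan is to mirror the argument of Theorem 3, but applied to the per-user quantity $\sum_{n\in\mathcal{N}}\sum_{h\in\mathcal{H}(m_u)}\tilde{A}_{n,u,h}$ and using the upper-tail form of the Chernoff Bound. First I would fix the user $u$ (hence the model type $m_u$) and, exactly as in Lemma 1, record that $\mathbb{E}[\sum_{n\in\mathcal{N}}\sum_{h\in\mathcal{H}(m_u)}\tilde{A}_{n,u,h}] = \sum_{n\in\mathcal{N}}\sum_{h\in\mathcal{H}(m_u)}A^{\dagger}_{n,u,h} = \eta^{\dagger}$, where $\eta^{\dagger}\le 1$ by constraint (\ref{con.A}).

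The key step is to recast the sum as a sum of independent $\{0,1\}$-valued variables. For each BS $n$ set $z_n = \sum_{h\in\mathcal{H}(m_u)}\tilde{x}_{n,h}\cdot\tilde{\phi}_{n,u,h}$. Because the caching rounding enforces $\|\tilde{\mathbf{x}}_{n,m_u}\|_1=1$, exactly one $\tilde{x}_{n,h}$ is nonzero, so $z_n$ equals the single selected $\tilde{\phi}_{n,u,\hat{h}}$ and therefore $z_n\in\{0,1\}$ with $\mathbb{E}[z_n]=\sum_{h\in\mathcal{H}(m_u)}A^{\dagger}_{n,u,h}$. Since the multinoulli draws $\tilde{\mathbf{x}}_{n,m_u}$ (Lines \ref{alg:1:2}--\ref{alg:1:4}) and the Bernoulli draws $\tilde{\phi}_{n,u,\cdot}$ (Line \ref{alg:1:5}) use disjoint, independent randomness across BSs, the variables $\{z_n\}_{n\in\mathcal{N}}$ are mutually independent. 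This is precisely where the argument is simpler than Theorem 1: fixing a single user means no $\tilde{x}_{n,h}$ is shared across the summation index, so the two-stage decomposition is unnecessary and one direct Chernoff application suffices; moreover, since $z_n\in\{0,1\}$ already, the normalization into $[0,1]$ that Theorem 3 needed is not required here. I expect verifying this cross-BS independence (and recognizing each $z_n$ as a Bernoulli, hence a two-outcome multinoulli admissible in the Chernoff statement) to be the only delicate point.

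With $\sum_{n}z_n$ now a sum of independent $[0,1]$ variables of mean $\eta^{\dagger}$, I would invoke $\mathrm{Pr}[\sum_{n}z_n\ge(1+\delta)\eta^{\dagger}]\le\exp(\frac{-\delta^2\eta^{\dagger}}{2+\delta})$ for $\delta>0$. To force the failure probability below $\frac{1}{|\mathcal{H}|^2}$, I would impose $\frac{\delta^2\eta^{\dagger}}{2+\delta}\ge 2\ln|\mathcal{H}|$, i.e. $\eta^{\dagger}\delta^2 - 2\ln|\mathcal{H}|\,\delta - 4\ln|\mathcal{H}|\ge 0$, whose positive root gives the admissibility threshold $\delta\ge\frac{\ln|\mathcal{H}|}{\eta^{\dagger}}+\sqrt{\frac{\ln^2|\mathcal{H}|}{\eta^{\dagger 2}}+\frac{4\ln|\mathcal{H}|}{\eta^{\dagger}}}$, identical in form to the bound derived in Theorem 3.

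Finally, as in Theorem 3, I would select the slightly larger but cleaner value $\delta=\frac{2\ln|\mathcal{H}|}{\eta^{\dagger}}+2\sqrt{\frac{\ln|\mathcal{H}|}{\eta^{\dagger}}}$. Writing $a=\sqrt{\ln|\mathcal{H}|/\eta^{\dagger}}$ so that $\delta=2a^2+2a$, one checks that $\delta$ dominates the threshold $a^2+\sqrt{a^4+4a^2}$ (after squaring this reduces to $4a^3\ge 0$), and that $1+\delta=2a^2+2a+1=(\sqrt{2}\,a+\frac{1}{\sqrt{2}})^2+\frac{1}{2}=(\sqrt{\frac{2\ln|\mathcal{H}|}{\eta^{\dagger}}}+\frac{1}{\sqrt{2}})^2+\frac{1}{2}$. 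Under the implicit assumption $\eta^{\dagger}\ge\ln|\mathcal{H}|$, this yields the claimed high-probability bound on $\sum_{n\in\mathcal{N}}\sum_{h\in\mathcal{H}(m_u)}\tilde{A}_{n,u,h}$.
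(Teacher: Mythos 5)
Your proposal is correct and takes essentially the same route the paper intends: the paper proves this theorem only by reference (``by similar proofs'' to Theorem~2), and your instantiation---reducing the per-user sum to independent $\{0,1\}$-valued variables $z_n$ across BSs (exactly one $\tilde{x}_{n,h}=1$ per BS for $h\in\mathcal{H}(m_u)$), applying the upper-tail Chernoff bound with target failure probability $1/|\mathcal{H}|^2$, and choosing the same $\delta=\frac{2\ln|\mathcal{H}|}{\eta^{\dagger}}+2\sqrt{\frac{\ln|\mathcal{H}|}{\eta^{\dagger}}}$---is precisely that argument carried out in full. One minor remark: the ``implicit assumption $\eta^{\dagger}\ge\ln|\mathcal{H}|$'' you invoke at the end is both unnecessary (your chosen $\delta$ dominates the admissibility threshold unconditionally, and passing from $(1+\delta)\eta^{\dagger}$ to the stated bound needs only $\eta^{\dagger}\le 1$) and in fact unsatisfiable here, since constraint~(\ref{con.A}) forces $\eta^{\dagger}\le 1<\ln|\mathcal{H}|$, so that clause should simply be dropped rather than assumed.
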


\begin{theorem}
For user $u\in \mathcal{U}$, CoCaR's end-to-end inference latency
has a high probability of not exceeding the user's tolerable perceived latency
by a factor larger than $(\sqrt{\frac{2\ln|\mathcal{H}|}{\mathbb{T}^{{\dagger}}_u}} + \frac{1}{\sqrt{2}})^2 + \frac{1}{2}$,
where $\mathbb{T}^{\dagger}_u$ is the end-to-end inference latency perceived by user $u$ in the optimal fractional solution.
\end{theorem}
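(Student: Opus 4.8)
The plan is to mirror the argument already used for Theorem 3 (the memory-capacity bound), since the two quantities are structurally identical: there the random object was $\sum_{h}\tilde{x}_{n,h}r_h$ with fractional mean $\zeta^{\dagger}_n\le R_n$, whereas here it is the realized end-to-end latency $\sum_{n}\sum_{h}\tilde{A}_{n,u,h}\hat{\mathbb{T}}_{n,u,h}$ of a fixed user $u$, with fractional mean $\mathbb{T}^{\dagger}_u=\sum_{n}\sum_{h}A^{\dagger}_{n,u,h}\hat{\mathbb{T}}_{n,u,h}\le ddl_u$. First I would invoke Lemma 1, specifically Eq.~(\ref{E[ddl]}), to record that the expectation of the realized latency equals $\mathbb{T}^{\dagger}_u$, and that $\mathbb{T}^{\dagger}_u\le ddl_u$ by constraint (\ref{con.Addl}). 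I would then write the realized latency as a sum over BSs, $\sum_{n\in\mathcal{N}}v_{n}$ with $v_n=\sum_{h\in\mathcal{H}(m_u)}\tilde{A}_{n,u,h}\hat{\mathbb{T}}_{n,u,h}$, and argue that the $v_n$ are independent and can be normalized into $[0,1]$ by dividing through by $\max_{n,h}\hat{\mathbb{T}}_{n,u,h}$, which is the precondition for the Chernoff Bound.

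The core step is a single application of the upper-tail Chernoff Bound theorem. Since the $v_n$ are independent with $\mathbb{E}[\sum_n v_n]=\mathbb{T}^{\dagger}_u$, and since $\{\sum_n v_n\ge(1+\delta)ddl_u\}\subseteq\{\sum_n v_n\ge(1+\delta)\mathbb{T}^{\dagger}_u\}$ owing to $\mathbb{T}^{\dagger}_u\le ddl_u$, for every $\delta>0$ I obtain
\[
\mathrm{Pr}\Big[\sum_{n\in\mathcal{N}}\sum_{h\in\mathcal{H}(m_u)}\tilde{A}_{n,u,h}\hat{\mathbb{T}}_{n,u,h}\ge(1+\delta)\,ddl_u\Big]\le\exp\Big(\tfrac{-\delta^2\mathbb{T}^{\dagger}_u}{2+\delta}\Big).
\]
To drive this probability below $1/|\mathcal{H}|^2$ (so a union bound over the $|\mathcal{H}|$-indexed quantities still yields a high-probability guarantee), I require $\exp(-\delta^2\mathbb{T}^{\dagger}_u/(2+\delta))\le 1/|\mathcal{H}|^2$, which upon taking logarithms reduces to the quadratic inequality $\mathbb{T}^{\dagger}_u\delta^2-2\ln|\mathcal{H}|\,\delta-4\ln|\mathcal{H}|\ge 0$. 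Solving for its larger root yields the threshold $\delta\ge\frac{\ln|\mathcal{H}|}{\mathbb{T}^{\dagger}_u}+\sqrt{\frac{\ln^2|\mathcal{H}|}{(\mathbb{T}^{\dagger}_u)^2}+\frac{4\ln|\mathcal{H}|}{\mathbb{T}^{\dagger}_u}}$, and I would then select the cleaner value $\delta=\frac{2\ln|\mathcal{H}|}{\mathbb{T}^{\dagger}_u}+2\sqrt{\frac{\ln|\mathcal{H}|}{\mathbb{T}^{\dagger}_u}}$, verifying by squaring that it exceeds the threshold (the gap being exactly $4(\ln|\mathcal{H}|/\mathbb{T}^{\dagger}_u)^{3/2}\ge 0$). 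Writing $a=\sqrt{\ln|\mathcal{H}|/\mathbb{T}^{\dagger}_u}$, a short computation rewrites $1+\delta=1+2a^2+2a=(\sqrt{2}a+\tfrac{1}{\sqrt{2}})^2+\tfrac12$, which is precisely the claimed overshoot factor; the assumption $\mathbb{T}^{\dagger}_u\ge\ln|\mathcal{H}|$ (i.e.\ $a\le 1$) keeps $\delta$ a bounded constant and the bound meaningful.

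The step requiring the most care is justifying that the summands $v_n$ are genuinely independent and $[0,1]$-normalizable, because—unlike the memory bound of Theorem 3, whose summands involve only the caching variables $\tilde{x}_{n,h}$—each $v_n$ here is built from the product $\tilde{A}_{n,u,h}=\tilde{x}_{n,h}\tilde{\phi}_{n,u,h}$ of two separately rounded variables. Independence across $n$ nevertheless holds, since $v_n$ is a function only of $\{\tilde{x}_{n,\cdot},\tilde{\phi}_{n,u,\cdot}\}$ and these collections are drawn independently for distinct BSs; within a fixed $n$ the multinoulli sampling of $\tilde{x}_{n,\cdot}$ activates exactly one submodel of $m_u$, so $v_n$ collapses to a single bounded term. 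Should one prefer to avoid reasoning directly about the product, the two-stage conditioning of Theorem 1—first rounding $x^{\dagger}$ with $\phi^{\dagger}$ held fixed, then rounding $\phi^{\dagger}$—transplants verbatim, after which the remaining algebra is identical to the above.
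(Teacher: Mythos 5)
Your proposal is correct and follows essentially the same route as the paper: the paper proves this statement only by appeal to ``similar proofs'' to its memory-capacity theorem, and your argument is exactly that transplantation --- Lemma~1 gives $\mathbb{E}[\sum_{n}\sum_{h}\tilde{A}_{n,u,h}\hat{\mathbb{T}}_{n,u,h}]=\mathbb{T}^{\dagger}_u\le ddl_u$, a single upper-tail Chernoff bound with target probability $1/|\mathcal{H}|^2$ yields the quadratic threshold on $\delta$, and the choice $\delta=\frac{2\ln|\mathcal{H}|}{\mathbb{T}^{\dagger}_u}+2\sqrt{\frac{\ln|\mathcal{H}|}{\mathbb{T}^{\dagger}_u}}$ gives precisely the stated factor $1+\delta$. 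Your explicit handling of the independence of the per-BS summands $v_n$ (the product structure $\tilde{A}_{n,u,h}=\tilde{x}_{n,h}\tilde{\phi}_{n,u,h}$, resolved either by per-BS locality of the sampling or by the two-stage conditioning of Theorem~1) is, if anything, more careful than the paper's terse treatment, and carries the same normalization caveat the paper itself glosses over.
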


\begin{theorem}
For model type $m_u$ requested by user $u, u \in \mathcal{U}$, there is a high probability that CoCaR's model loading latency will not exceed user $u$'s request initiation time by more than a factor of $(\sqrt{\frac{2\ln|\mathcal{H}|}{T^{{dep\dagger}}_{m_u}}} + \frac{1}{\sqrt{2}})^2 + \frac{1}{2}$, 
where $T^{{dep\dagger}}_{m_u}$ is the model loading latency of model $m_u$ in the optimal fractional solution.
\end{theorem}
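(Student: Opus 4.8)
The plan is to mirror the proof of Theorem 3, treating the realized model loading latency as a sum of independent, bounded random variables and applying the Chernoff Bound. Define the realized loading latency for the model type $m_u$ requested by user $u$ as $\tilde{T}^{dep}_{m_u} = \sum_{n\in\mathcal{N}}\sum_{h,h'\in\mathcal{H}(m_u)} \tilde{A}_{n,u,h}\cdot x_{n,h'}(\tau-1)\cdot D_{m_u}(h',h)$. First I would invoke Lemma 1, whose final display already establishes that $\mathbb{E}[\tilde{T}^{dep}_{m_u}] = \sum_{n\in\mathcal{N}}\sum_{h,h'\in\mathcal{H}(m_u)} A^{\dagger}_{n,u,h}\cdot x_{n,h'}(\tau-1)\cdot D_{m_u}(h',h) = T^{{dep\dagger}}_{m_u} \le s_u$, so the expected value coincides with the optimal fractional loading latency and is bounded by the request initiation time.

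Next I would group the sum by base station, writing the per-BS contribution as $\ell_n = \sum_{h,h'\in\mathcal{H}(m_u)} \tilde{A}_{n,u,h}\cdot x_{n,h'}(\tau-1)\cdot D_{m_u}(h',h)$, so that $\tilde{T}^{dep}_{m_u} = \sum_{n\in\mathcal{N}} \ell_n$. Because the previous-window decisions $x_{n,h'}(\tau-1)$ are deterministic at the current window and $\tilde{A}_{n,u,h} = \tilde{x}_{n,h}\cdot\tilde{\phi}_{n,u,h}$ is formed from the per-BS multinoulli sample $\tilde{\mathbf{x}}_{n,m_u}$ together with the independent Bernoulli draws $\tilde{\phi}_{n,u,h}$, the variables $\{\ell_n\}_{n\in\mathcal{N}}$ are mutually independent; each is nonnegative and bounded, hence normalizable into $[0,1]$, matching the hypotheses of the Chernoff Bound.

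Then I would apply the upper-tail Chernoff Bound with $\mu = T^{{dep\dagger}}_{m_u}$, giving $\mathrm{Pr}[\tilde{T}^{dep}_{m_u} \ge (1+\delta)T^{{dep\dagger}}_{m_u}] \le \exp(\frac{-\delta^2 T^{{dep\dagger}}_{m_u}}{2+\delta})$ for all $\delta>0$; since $T^{{dep\dagger}}_{m_u}\le s_u$, the same $\delta$ controls the probability of exceeding $(1+\delta)s_u$. Following Theorem 3, I would force this bound below $1/|\mathcal{H}|^2$, which requires $\delta \ge \frac{\ln|\mathcal{H}|}{T^{{dep\dagger}}_{m_u}} + \sqrt{\frac{\ln^2|\mathcal{H}|}{(T^{{dep\dagger}}_{m_u})^2} + \frac{4\ln|\mathcal{H}|}{T^{{dep\dagger}}_{m_u}}}$, and then select $\delta = \frac{2\ln|\mathcal{H}|}{T^{{dep\dagger}}_{m_u}} + 2\sqrt{\frac{\ln|\mathcal{H}|}{T^{{dep\dagger}}_{m_u}}}$. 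Under the stated assumption $T^{{dep\dagger}}_{m_u}\ge \ln|\mathcal{H}|$ this choice satisfies the threshold, and a short algebraic simplification collapses $1+\delta$ into the claimed factor $(\sqrt{\frac{2\ln|\mathcal{H}|}{T^{{dep\dagger}}_{m_u}}} + \frac{1}{\sqrt{2}})^2 + \frac{1}{2}$, yielding the high-probability guarantee.

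The main obstacle I anticipate is the independence and normalization step rather than the Chernoff arithmetic, which is identical to Theorem 3. In the memory bound the random summands are pure caching indicators, whereas here each $\ell_n$ is built from the product $\tilde{x}_{n,h}\tilde{\phi}_{n,u,h}$; I must carefully verify that mixing the per-BS multinoulli caching sample with the independent routing-attempt draws still yields independent per-BS terms and a valid $[0,1]$ normalization, so that the single-stage Chernoff argument (rather than the two-stage argument used in Theorem 1) is legitimate here.
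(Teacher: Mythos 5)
Your proposal is correct and follows essentially the same route as the paper: the paper proves Theorem 5 only implicitly, by declaring it analogous to Theorem 2, and your argument is precisely that analogue --- using the expectation already computed in the last display of Lemma 1, grouping the loading-latency sum into independent, normalizable per-BS terms $\ell_n$, applying the single-stage upper-tail Chernoff bound with $\mu = T^{dep\dagger}_{m_u} \le s_u$, and choosing $\delta = \frac{2\ln|\mathcal{H}|}{T^{dep\dagger}_{m_u}} + 2\sqrt{\frac{\ln|\mathcal{H}|}{T^{dep\dagger}_{m_u}}}$ so that $1+\delta$ collapses to the claimed factor. Your worry about mixing the multinoulli sample with the Bernoulli routing draws is resolved exactly as you suggest: all randomness in $\ell_n$ is local to BS $n$, so the per-BS terms are mutually independent and the single-stage argument is legitimate, matching the level of rigor the paper itself applies in Theorem 2.
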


\subsection{Complexity Analysis of CoCaR}
First, we analyze the scale of the variables and constraints in problem $\mathcal{P}1\text{-LR}$.
Let $h^* = \max_{m} |\mathcal{H}(m)|, m \in \mathcal{M}$, thus the number of variables $x_{n,h}$ and $A_{n,u,h}$ are bounded by $O(N|\mathcal{H}|)$ and $O(NUh^*)$, respectively.
Therefore, the total number of decision variables is bound by $O(N|\mathcal{H}|) + O(NUh^*)\le O(NMh^*)+O(NUh^*)\le O(NUh^*)$, given that $M < U$ in practice.
Similarly, the number of constraints is bounded by $O(NM) + O(N) + 3O(U) + O(NUh^*)$. Since $M < U < Uh^*$, the total number of constraints is also $O(NUh^*)$.
 
Then, considering the three loops in Lines \ref{alg:1:for1}, \ref{alg:1:for2}, and  \ref{alg:1:for3} of Algorithm \ref{alg:1}, the time complexity of CoCaR is $O(NM) + O(NUh^*) \le O(NUh^*)$, as $M < U$ holds.

\subsection{Extension to Practice}
\label{practice}

Since the CoCaR algorithm's rounded solution may violate a few constraints, we employ a heuristic strategy to convert the rounded solutions \q{$\tilde{x}_{n,h}, \tilde{y}_{n,u} $} into feasible ones $x_{n,h}, y_{n,u} $.  
First, for BS $n\in \mathcal{N}$ that violates the memory capacity constraint, we evaluate the benefit of each model type $m \in \mathcal{M}$ based on user requests and inference precision $p_h$ (where $\tilde{x}_{n,h}=1, h \in \mathcal{H}(m)$). 
We remove the least beneficial submodel and try to cache smaller ones.
If no suitable submodel can be cached, user requests previously routed to that BS are redirected to the cloud. This process continues until the memory constraint (\ref{con.R}) is satisfied.
Next, for users $u \in \mathcal{U}$ violating constraints on maximum tolerable perceived latency or model loading latency, requests are redirected to the cloud until constraints (\ref{con.Addl}) and (\ref{con.Asu}) are met.
Finally, if multiple $y_{n,u} = 1$ for a user $u$, the request is routed to the BS with the highest inference precision. 
These steps ensure all caching and routing decisions satisfy the constraints.

\section{Adaptation to Online Scenarios}
\label{extend approach}

In real-world scenarios, the assumption made in Sec. \ref{request routing model}, i.e., future user requests can be predicted in advance, and the required models can be pre-downloaded from the cloud to edge servers, does not always hold \cite{ton-online1, ton-online2}. Thus, caching decisions can only be made after receiving the current user request, and the user must wait until the required model is cached before being served.
To address this, we further extend the CoCaR approach to an online version, CoCaR-OL, which adjusts the dynamic model caching strategy according to real-time user request patterns.

\begin{figure}[t]
\centering
\includegraphics[width=.485\textwidth]{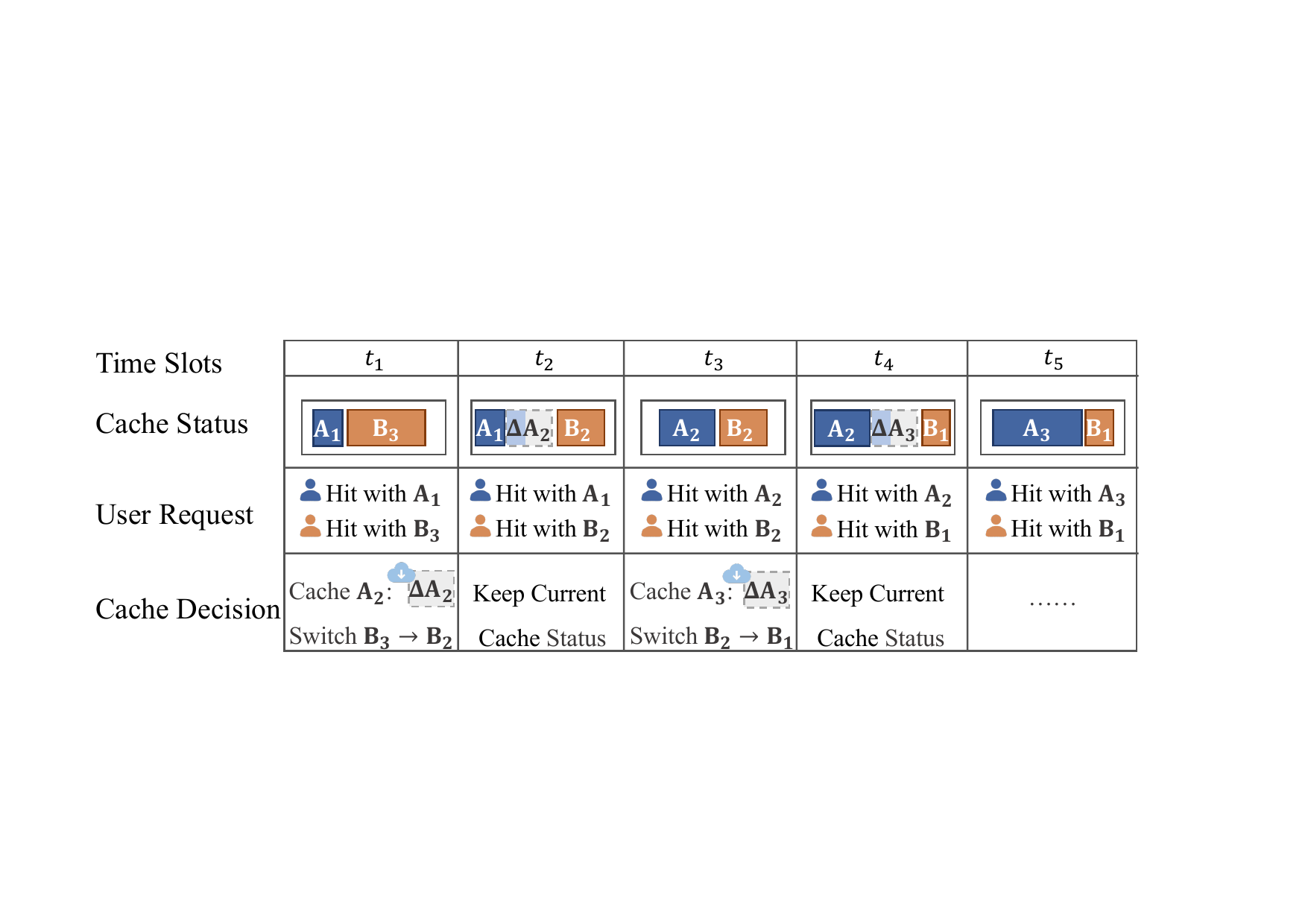}
\caption{
Illustration of model caching in an online scenario. There are two types of models, $A$ and $B$, each comprising three distinct submodels.
At time slot $t_1$, the system decides to cache submodel $A_2$, which requires downloading the additional component $\Delta A_2$ from the cloud to switch from $A_1$ to $A_2$.
To satisfy the cache capacity constraint, submodel $B_3$ is simultaneously reduced to $B_2$.
Since downloading $\Delta A_2$ takes two time slots, $A_2$ becomes available to serve users only from time slot $t_3$ onward. In contrast, model eviction is fast, allowing $B_2$ to serve users immediately at $t_2$.
} 
\label{fig:online_system_model}
\end{figure}

\subsection{System Model}
\label{cocar-ol system model}

As shown in Fig. \ref{fig:online_system_model}, in the online scenario, model caching decisions are made based on historical user request patterns in each time slot.
Therefore, when deciding to cache a larger submodel, the components required for the submodel switching need to be downloaded from the cloud, which may not be completed within the current time slot. 
Since each submodel $h^m_{i+1}$ is typically constructed by incrementally adding components to $h^m_i$, the system can better utilize cached models to serve users by downloading and caching the intermediate submodels in sequence.
For example, if a cached submodel $h^m_1$ is decided to be switched to submodel $h^m_3$, it will first switch to submodel $h^m_2$, and then to submodel $h^m_3$ as the additional required components are downloaded. 
To model this online scenario, we introduce some new notations. Let $W_n$ be the bandwidth between the cloud and base station $n$, and $\Delta t$ be the duration of time slot $t$. Let $\tilde{O}^t_{n,h}$ and $O^t_{n,h}$ denote the sizes of data for submodel $h$ to be downloaded at BS $n$ before and after the caching decision in time slot $t$, respectively. Similarly, $\tilde{X}^t_{n,h}$ and $X^t_{n,h}$ represent the caching status of submodel $h$ at BS $n$ before and after the caching decision in time slot $t$, respectively.

Thus, at time slot $t$, the remaining data size of submodel $h$ to be downloaded at BS $n$ can be calculated as:
\begin{equation}
    \label{eq:tildeO}
    \tilde{O}^t_{n,h} =
    \begin{cases}
        \max(O^{t-1}_{n,h} - \bar{t}\cdot W_n, 0), &\text{if } O^{t-1}_{n,h} \neq 0,\\
        0,  &\text{otherwise},
    \end{cases}
\end{equation}
where $\bar{t} = \Delta t - \min\Big(\frac{\sum_{h' \prec h} O^{t-1}_{n,h'}}{W_n}, \Delta t\Big)$, meaning that each submodel to be downloaded can start downloading immediately after the previous submodel is downloaded.

Then, whether submodel $h$ has finished downloading at time slot $t$ can be denoted by:
\begin{equation}
    \label{eq:g}
    g^t_{n,h} = \mathbf{1}\{O^{t-1}_{n,h} \neq 0 \land \tilde{O}^t_{n,h}=0\},
\end{equation}
where $\mathbf{1}\{a\} = 1$ if condition $a$ is true, and 0 otherwise.

Next, based on the downloading status of submodels, the caching state can be calculated as:
\begin{equation}
    \label{eq:tildeX}
    \tilde{X}^t_{n,h} =
    \begin{cases}
        1,  &  \text{if }  g^t_{n,h} = 1  \land \sum_{h'\succ h} g^t_{n,h'} = 0,\\
        0,  & \text{if } \sum_{h'\succ h} g^t_{n,h'} \ge 1, \\
        X^{t-1}_{n,h}, & \text{otherwise}.
    \end{cases}
\end{equation}
Specifically, if at time slot $t-1$, submodel $h$ at BS $n$ has just been downloaded, and no submodel larger than $h$ has been downloaded in this slot, then submodel $h$ will be cached and $\tilde{X}^t_{n,h}$ is set to 1. 
Conversely, if at time slot $t$, a larger submodel than $h$ that has been downloaded at BS $n$, so that submodel $h$ should not be cached, i.e., $\tilde{X}^t_{n,h}$ is set to 0.
Otherwise, the caching state of submodel $h$ at BS $n$ remains unchanged.

Consequently, based on the above caching state, the inference precision that model type $m$ can provide at BS $n$ during time slot $t$ can be calculated as:
\begin{equation}
    \label{eq:P}
    P_{n,m}(t) = \sum_{h \in \mathcal{H}(m)} \tilde{X}^t_{n,h}\cdot p_h.
\end{equation}

Similar to Eq. (\ref{con.T}), the total end-to-end inference latency at time slot $t$ for routing user requests of model type $m$ from home BS $n'$ to target BS $n$ for inference can be calculated as:
\begin{equation}
    \mathbb{T}^t_m(n',n) = \frac{d_m}{\varphi_{n'}} + \frac{d_m}{r_{n',n}} + \lambda_{n'n} + \sum_{h \in \mathcal{H}(m)} \tilde{X}^t_{n,h} \cdot \frac{c_h\cdot d_m}{C_n},
\end{equation}
where $d_m$ represents the input data size of model type $m$.

Finally, let $Q^t_m(n',n)$ be the Quality of Experience (QoE) at time slot $t$ for routing a user request of model type $m$ with home BS $n'$ to target BS $n$ for inference.
To comprehensively evaluate the trade-off between inference precision and the user-perceived end-to-end inference latency, we define the QoE function as follows:
\begin{equation}
    \label{eq:gain}
    Q^t_m(n',n) = P_{n,m}(t) \cdot \max(0, 1 - (\mathbb{T}^t_m(n',n) - \theta) \cdot \alpha), 
\end{equation}
where $\theta$ is a normalization factor (i.e., the minimum end-to-end inference latency), and $\alpha$ is a smoothing factor that controls the degradation of precision due to end-to-end inference latency. Specifically, when $\mathbb{T}^t_m(n',n) > \theta$, the QoE value $Q^t_m(n',n)$ falls below the original precision $P_{n,m}(t)$, reflecting the negative impact of latency; conversely, if the inference latency is equal to $\theta$, the QoE will not be degraded.

Accordingly, the target BS that maximizes the QoE for a user requesting model $m$ with home BS $n'$ at time slot $t$ can be denoted as:
\begin{equation}
    \label{eq:bestn}
    n^{t*}_m(n') = \arg\max_{n \in \mathcal{N}} Q^t_m(n',n).
\end{equation}

\textbf{Problem Formulation.} In the online model caching and request routing optimization problem, we route user requests to the BS that maximizes their QoE. The optimization objective is to maximize the total QoE of all users by designing an effective caching policy.
The problem is formulated as follows:
\begin{align}
    (\mathcal{P}2)\qquad &\max_{X} \sum_{t\in \mathcal{T}}\sum_{u\in \mathcal{U}} Q^{t}_{m_u}(\hat{n}_u,n^{t*}_m(\hat{n}_u)) \\
    \text{s.t.} \quad & \sum_{h\in \mathcal{H}}X^t_{n,h} \cdot r_h \le R_n, \forall t \in \mathcal{T}, n \in \mathcal{N}\\
    & \mathbb{T}^t_{m_u}(\hat{n}_u,n^{t*}_{m_u}(\hat{n}_u)) \le ddl_u, \forall t \in \mathcal{T}, u \in \mathcal{U}. 
\end{align}

\newcommand{\tO}{{\tilde{O}}}
\newcommand{\tX}{{\tilde{X}}}
\begin{algorithm}[t]
    \caption{CoCaR-OL Algorithm}
    \label{alg:2}
    \small
    \begin{algorithmic}[1]
        \REQUIRE $\mathcal{U}, \mathcal{N}, \mathcal{M}, round$
        \ENSURE $X^{t}_{n,h}, O^{t}_{n,h}$
        \STATE Initialize time slot variable $t=0$.
        \STATE Initialize model caching and download status $X^{0}_{n,h}, O^{0}_{n,h}.$ 
        \WHILE[Start online monitoring and deciding]{True}
            \STATE Step to new time slot: $t\leftarrow t+1$.
            \STATE \textit{\textbf{Routine Update}}: Calculate temporary download status $\tO_{n,h}^t$ according to Eq. \eqref{eq:tildeO}. \label{alg:2:5}
            \STATE \textit{\textbf{Routine Update}}: Calculate download-finishing flag variable $g_{n,h}^t$ and temporary cache status $\tX_{n,h}^t$ according to Eq. \eqref{eq:tildeX}. \label{alg:2:6}
            \STATE Receive current user requests $\{m_u^t\mid u\in \mathcal{U}\}$. \label{alg:2:7}
            \STATE Calculate the best routing destination $n^{t*}_m(n')$ from Eq. \eqref{eq:bestn}.
            \FOR[Route user requests and calc. total QoE]{$u \in \mathcal{U}$}
                \STATE Route use request $m_u$ to the best destination $n_{m_u}^{t^*}(\hat{n}_u)$.
                \STATE Calculate the corresponding QoE value $Q_{m_u}^t$ by Eq. \eqref{eq:gain}.
            \ENDFOR 
            \STATE Update the recent proportion of user requests $f_{n,m}^t$ by Eq.~\eqref{eq:f}.\label{alg:2:13}
            \STATE $O_{n,h}^t := \tO_{n,h}^t, X_{n,h}^t := \tX_{n,h}^t$.
            \FOR[Make model caching decision]{$r = 1 \rightarrow round$} 
                \STATE Randomly choose a BS $n$ for caching adjustment. \label{alg:2:16}
                \STATE Compute the future expected gain $\Delta R$ for each possible cached model  switch based on $O_{n,h}^t, X_{n,h}^t$ according to Eq. \eqref{eq:deltaR}.
                \STATE Find the best scheme for model switching (caching and evicting) by solving a memory-constrained knapsack problem. \label{alg:2:18}
                \STATE \textit{\textbf{Switch Update}}: Update download status $O_{n,h}^t$ by Eq. \eqref{eq:O}.\label{alg:2:19}
                \STATE \textit{\textbf{Switch Update}}: Update cache status $X_{n,h}^t$ by Eq. \eqref{eq:X}. \label{alg:2:20}
            \ENDFOR
        \ENDWHILE
    \end{algorithmic}
\end{algorithm}

\subsection{Algorithm Design}
To solve problem $\mathcal{P}2$, we propose CoCaR-OL, a heuristic caching algorithm based on predictive future gain, which enables fine-grained perception of changes in user request patterns in each time slot and allows flexible adjustment of cached models.
In CoCaR-OL, the action space for enlarging cached submodels is defined as the set of all submodels from the currently cached one up to the first submodel whose additional components, relative to the cached submodels, cannot be fully downloaded within a time slot.

Specifically, we first use user request patterns from the past $\Delta T^P$ time slots to simulate user requests over the next $\Delta T^F$ time slots.
In particular, the proportion of user requests for model type $m$ with home BS $n$ over the past $\Delta T^P$ time slots can be denoted as:
\begin{equation}
    \label{eq:f}
    f^t_{n,m} = \frac{\sum_{i = t-\Delta T^P}^{i = t} \sum_{u\in \mathcal{U}} \mathbf{1}\{m^i_u = m \land \hat{n}^i_u = n\}}{\sum_{i = t-\Delta T^P}^{i = t} U}.
\end{equation}

Next, let $\pi^t_{n,m}(X^t; O^t) \in \{(h',h) \mid X^t_{n,h'} = 1, h \in \mathcal{H}(m)\}$ denote the action taken at time slot $t$ for a cached submodel $h'$ of model type $m$ at BS $n$, under the observation of system states $X^t$ and $O^t$.
When $h' \succ h$, it indicates a switch from the cached submodel $h'$ to a smaller submodel $h$; conversely, it indicates a switch to a larger submodel $h$ or remaining unchanged.
Then, at time slot $t$, the expected future reward of switching a cached submodel $h'$ of model $m$ at BS $n$ to a submodel $h$, while keeping all other system states unchanged, is computed as:
\begin{align}
    R(\pi^t_{n,m} = (h',h)) 
    &= \sum_{t'=t+1}^{\Delta T^F}\sum_{n' \in \mathcal{N}}  
       \gamma^{(t'-t)} \cdot f_{n',m}^{t'}  \notag \\
    &\quad \cdot Q^{t'}_m\!\bigl(n',n^{t'*}_m(n')\bigr),
\end{align}
where $\gamma^t$ denotes the discount factor of the user’s QoE in the $t$th future time slot.

The expected future gain from replacing the cached submodel $h'$ of model $m$ at BS $n$ with $h$ can be computed as:
\begin{align}
\label{eq:deltaR}
    \Delta R(\pi^t_{n,m} = (h',h)) 
    &= R(\pi^t_{n,m} = (h',h)) \notag \\
    &\quad - R(\pi^t_{n,m} = (h',h')).
\end{align}

Finally, after making the caching decision at time slot $t$, if it is decided to switch from the cached submodel $h'$ at BS $n$ to a larger target submodel $\hat{h}^t$, the download status of submodel $h$ can be computed as:
\begin{equation}
    \label{eq:O}
    O^t_{n,h} =
    \begin{cases}
        \Delta r_h,
        \qquad &\text{if } \tilde{O}^t_{n,h} = 0 \land \sum\limits_{h' \prec h} \tilde{X}^{t}_{n,h'} = 1 \land h\preceq \hat{h}^t,  \\
        \tilde{O}^t_{n,h},  &\text{otherwise},
    \end{cases}
\end{equation}
where $\Delta r_h$ represents the additional data size that each submodel $h$ between the cached submodel $h'$ and the target submodel $\hat{h}^t$ needs to download relative to its preceding submodel.
Otherwise, if the decision is to switch from the cached submodel $h'$ at BS $n$ to a smaller target submodel $\hat{h}^t$, the model's cache state can be updated directly as the time required for cache eviction is negligible:
\begin{equation}
    \label{eq:X}
    X^t_{n,h} =
    \begin{cases}
        1,  &  \text{if }  h = \hat{h}^t,\\
        0,  & \text{if } h = h', \\
        \tilde{X}^t_{n,h}, & \text{otherwise}.
    \end{cases}
\end{equation}

The procedure of the CoCaR-OL algorithm is shown in Alg. \ref{alg:2}. At the beginning of each time slot $t$, the download and cache states of each submodel, $\tilde{O}^t_{n,h}$ and $\tilde{X}^t_{n,h}$, are updated through a routine process (Lines \ref{alg:2:5}-\ref{alg:2:6}). Then, the maximum QoE for each user during the current time slot is calculated, and the historical request frequency $f^t_{n,m}$ of each model $m$ at each BS $n$ is updated (Lines \ref{alg:2:7}-\ref{alg:2:13}). In each iteration, a BS $n$ is randomly selected, and all submodels of models that are not being downloaded at BS $n$ are traversed. For each candidate submodel, it is hypothetically enlarged, and a knapsack problem is solved under the constraint of the remaining memory capacity of BS $n$. This step selects a combination of cached submodels from other candidate models and their smaller submodels, such that the expected future gain of cache switching is maximized (Lines \ref{alg:2:16}-\ref{alg:2:18}). The same process is then repeated for other candidate submodels. Finally, the model switching scheme that yields the maximum expected future gain is chosen from all the options, and the corresponding model states $O^t_{n,h}$ and $X^t_{n,h}$ are updated accordingly (Lines \ref{alg:2:19}-\ref{alg:2:20}).

\begin{table}[t]
    \centering
    \caption{Attributes of the Three ViT Submodels}
    \label{vit:att}
    \begin{tabular}{@{}cccccc@{}}
        \toprule
        \textbf{Submodel} & \textbf{Memory (MB)} & \textbf{FLOPs (GFlops)} & \textbf{Precision} \\
        \midrule
        1 & 174.32 & 5.70 & 0.8417 \\
        2 & 227.42 & 7.56 & 0.9413 \\
        3 & 342.05 & 11.29 & 0.9894 \\
        \bottomrule
    \end{tabular}
     \vspace{-2pt}
\end{table}
\begin{table}[t]
    \centering
    \caption{Loading Time of the Three ViT Submodels (s)}
    \label{vit:time}
    \begin{tabular}{@{}c|ccc@{}}
    \toprule
    \multirow{2}{*}[-3pt]{\textbf{\begin{tabular}[c]{@{}c@{}}Original\\ Submodels\end{tabular}}} & \multicolumn{3}{c}{\textbf{Final Submodels}} \\ \cmidrule(l){2-4}
               & \textbf{1} & \textbf{2} & \textbf{3} \\ \midrule
    /          & \makebox[1.5cm]{0.68860} & \makebox[1.5cm]{0.87696} & \makebox[1.5cm]{1.05821}    \\
    \textbf{1} & 0.00000    & 0.24794    & 0.46098    \\
    \textbf{2} & 0.04238    & 0.00000    & 0.25082    \\
    \textbf{3} & 0.04725    & 0.04242    & 0.00000    \\ \bottomrule
    \end{tabular}
\end{table}

\subsection{\revi{Complexity Analysis of CoCaR-OL}}
\revi{The complexity of Algorithm~\ref{alg:2} is primarily determined by system state updates and the caching decision process. First, the complexity of the routine state updates and user routing (Lines 5-14) in each time slot is $O(NH + N^2 + U)$. Second, the primary computational overhead arises from the $\textit{round}$ iterations within the caching decision loop (Lines~15–21). Specifically, Line~17 involves simulating and evaluating variations in the system state and the expected gain over $\Delta T^F$ future time slots, with a complexity of $O(round \cdot \Delta T^F \cdot N H^2)$. Furthermore, solving the caching optimization problem in Line~18 incurs an overhead of $O(\textit{round} \cdot M \cdot H \cdot V)$, where $V$ denotes the number of discrete capacity states in the memory-constrained knapsack problem. Additionally, the complexity of the switch update operation (Lines~19-20) is $O(NH)$. 
Thus, the total complexity of the algorithm is $O(\textit{round} \cdot (\Delta T^F \cdot N H^2 + M \cdot H \cdot V))$.}

\section{Performance Evaluation}
\label{experiment}

In this section, we conduct simulation experiments to evaluate CoCaR after converting its solutions into feasible ones (Sec.~\ref{practice}), as well as CoCaR-OL in online scenarios.

\subsection{Evaluation Setup}
\label{para}

Following a setup similar to \cite{poularakis} and \cite{yao}, the duration of each observation window is set to $\Delta \tau = 3$s and the number of windows $|\Gamma| = 10$, i.e., the total time is 30s. 
We consider an MEC network with $N=5$ BSs and $U=600$ users in each window. The coverage range of a BS is 150\,m, and the connection between BSs is established using the Erdős–Rényi random graph model \cite{er}. 
The wireless uplink transmission rate is set to $\varphi_{\hat{n}_u} = 20$\,Mbps, and the transmission rate between BSs is set to $r_{\hat{n}_u n} = 100$\,Mbps. 
\q{The bandwidth between the cloud and the base station is set to $W_n = 800$\,Mbps.}
The propagation time for one hop is set to 0.01s. We set the memory capacity for each BS $n$ to $R_n = 500$MB and the maximum computing power $C_n = 70$\,Gflops/s.

We use $M = 8$ DNN model types (e.g., ViT, swintransformer \cite{swin}), each with three submodels. Each submodel is trained and tested on the CIFAR-10 dataset \cite{cifar10}. Taking ViT as an example. Tables \ref{vit:att} and \ref{vit:time} show the attributes and loading times of the ViT submodels, respectively. 
Table \ref{vit:att} shows the differences in memory consumption and inference precision among the ViT submodels. Table \ref{vit:time} indicates that loading submodel 2 directly \q{from the secondary storage} takes 0.877s, whereas switching from submodel 1 to submodel 2 takes only 0.248s, thereby increasing the valid service time of ViT. The popularity of model types follows a Zipf distribution with a skewness coefficient of 0.8 \cite{zipf}, and each user generates a request for a random model type.
\q{The data size of each request and model input is $d_u = d_m = 0.144$\,MB.} 
The maximum tolerable latency for the user is set to 0.3s. 
We implemented the CoCaR using Python 3.10.14 and conducted all experiments on an Intel(R) Xeon(R) Gold 6230R CPU @ 2.10\,GHz with eight NVIDIA GeForce RTX 3070 GPUs.

\begin{table}[t]
\color{black}
    \centering
    \caption{Performance Comparison in Offline Scenario (HR: Hit Rate)}
    \label{tab:offline_res}
    \vspace{-6pt}
    \begin{tabular}{@{}lccc@{}}
        \toprule
        \textbf{Method} & \textbf{Avg. Precision} & \textbf{Avg. HR} & \textbf{Avg. Memory Util.} \\
        \midrule
        LR (Upper Bound) & 0.931 & -- & -- \\
        \textbf{CoCaR (Ours)} & \textbf{0.861} & \textbf{0.939} & \textbf{0.866} \\
        GatMARL & 0.460 & 0.521 & 0.818 \\
        Greedy & 0.430 & 0.473 & \textbf{0.867} \\
        $\text{SPR}^3$ & 0.414 & 0.462 & 0.728 \\
        Random & 0.152 & 0.169 & 0.705 \\
        \bottomrule
    \end{tabular}
    \vspace{-2pt}
\end{table}

\subsection{Benchmarks and Evaluation Metrics}
We compare CoCaR with the following algorithms.

\begin{itemize}
    \item \textbf{Linear-Relaxation (LR).} The optimal fractional solution of problem $\mathcal{P}1\text{-LR}$, obtained via a linear programming solver, serves as an upper bound for the optimal integer solution but is typically unattainable in practice.
    \item \textbf{$\text{SPR}^3$} \cite{poularakis}. It uses random rounding to solve the joint service caching and request routing problem under memory, computation, and communication constraints.
    \item \textbf{GatMARL} \cite{gatmarl}. GatMARL models the MEC environment as an undirected graph and applies a graph-attention-based multi-agent reinforcement learning algorithm to learn task offloading and service caching policies.
    \item \textbf{Greedy.} The model type is selected based on popularity, and one of its submodels is cached in descending order of precision. User requests are routed to the home BS.
    \item \textbf{Random.} The models cached at each BS are randomly selected from the submodels associated with each model type. User requests are randomly routed to a BS.
\end{itemize}
Note that $\text{SPR}^3$ and GatMARL only cache the complete service, and all benchmarks except LR ignore the impact of model loading time on decision-making.

Our evaluation is based on the following metrics:

\begin{itemize}
    \item \textbf{Average inference precision.} 
    It reflects the average inference precision for all user requests over all windows.
    \item \q{\textbf{Average QoE of users.} It reflects the average QoE for all user requests under the precision and latency tradeoff over all time slots, which is calculated according to Eq.~(\ref{eq:gain}) in Sec. \ref{cocar-ol system model}.}
    
    \item \textbf{Average hit rate.} 
    It reflects the ratio of requests served by the BS over all windows.
    \item \textbf{Average memory utilization.} 
    It reflects the memory usage ratio of cached models at the BS over all windows.
\end{itemize}

\revi{Table~\ref{tab:offline_res} summarizes the performance in the offline scenario under the settings described in Sec.~\ref{para}. The results demonstrate that CoCaR achieves performance closest to the theoretical optimum (i.e., LR) and significantly outperforms other baselines. As LR is used solely to establish a theoretical upper bound for the optimization objective and does not correspond to a practically deployable model, its hit rate and memory utilization are omitted from the table.}

\begin{figure}[t]
\centering
\includegraphics[width=.48\textwidth]{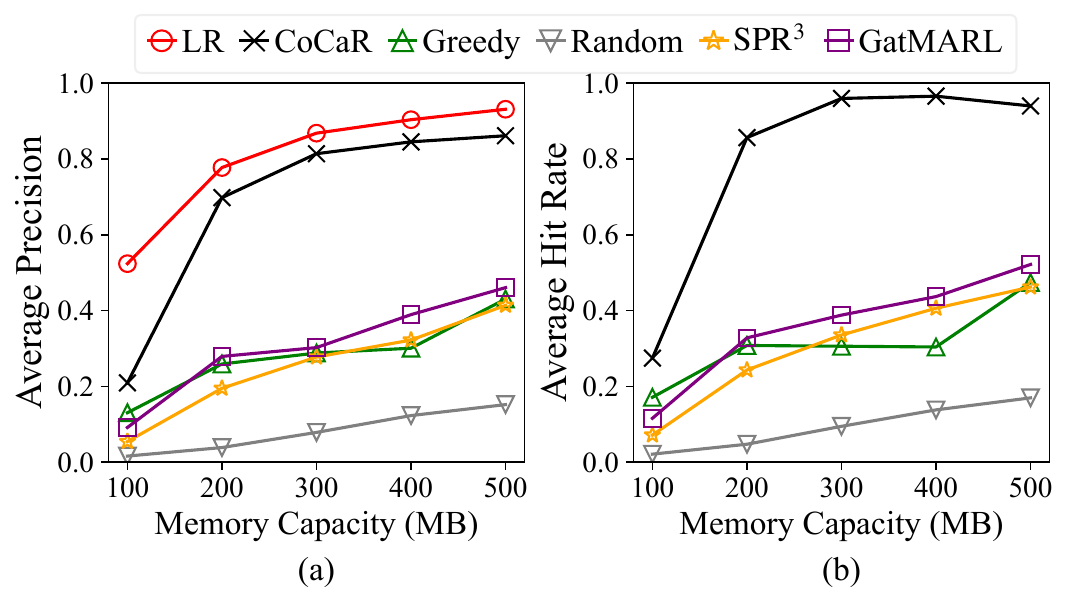}
\vspace{-6pt}
\caption{Impact of different BS memory capacities: (a) Average inference precision; (b) Average hit rate.}
\label{fig:R} 
\end{figure}

\begin{figure}[t]
\centering
\includegraphics[width=.48\textwidth]{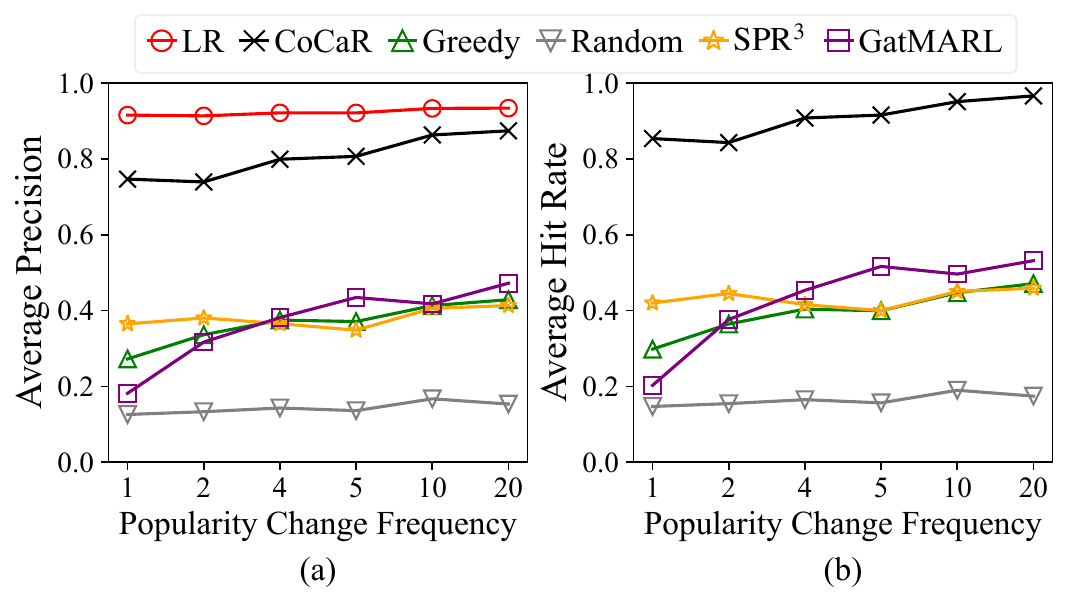}
\vspace{-6pt}
\caption{Impact of popularity change frequency: (a) Average inference precision; (b) Average hit rate. The x-axis indicates how many observation windows the popularity changes once.}
\label{fig:Chg} 
\end{figure}

\begin{figure}[t]
\centering
\includegraphics[width=.46\textwidth]{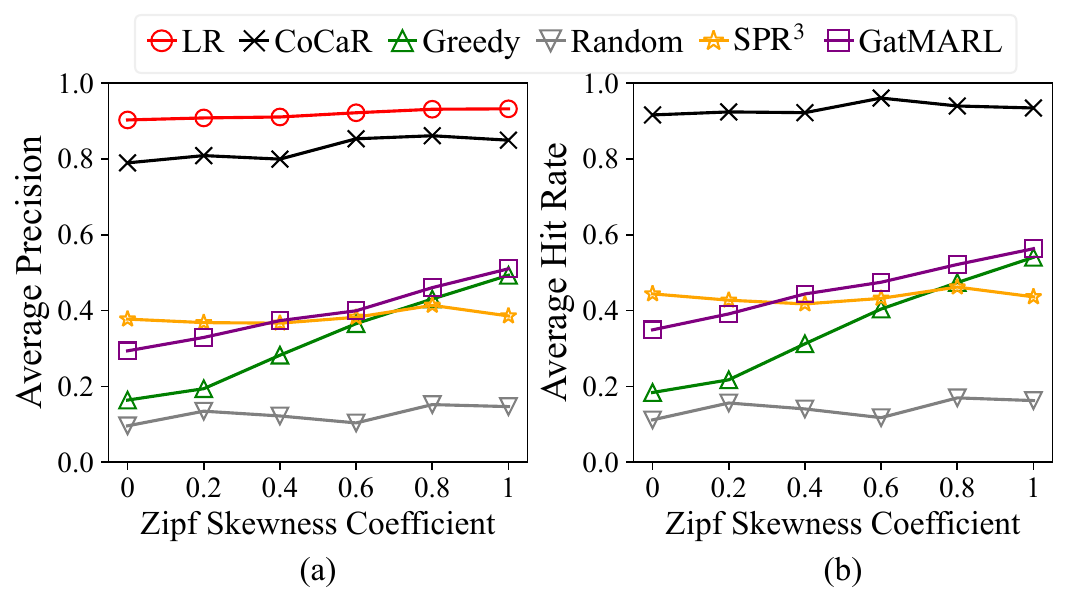}
\vspace{-6pt}
 \caption{Impact of different Zipf skewness coefficients:(a) Average inference precision; (b) Average hit rate.}
\label{fig:Z} 
\end{figure}

\subsection{Evaluation Results and Analysis of CoCaR}

\textbf{Impact of BS Memory Capacity.}
Fig. \ref{fig:R} shows the results of average inference precision and average hit rate as the BS memory capacity varies. As the BS memory capacity increases from 100\,MB to 500\,MB, the performance of all methods improves.  This is because a larger memory capacity allows BSs to cache more models, thereby satisfying more user requests. 
When $R = 500$\,MB, CoCaR's average inference precision gap with LR is only 7.5\%, outperforming other baselines by at least 40.1\%. Therefore, CoCaR, with its superior decision-making mechanism, can quickly adapt to changes in memory resources, consistently outperform other baselines, and steadily converge to the optimal LR algorithm.

\textbf{Impact of Popularity Change Frequency.}
In this experiment, we set the number of observation windows to $\Gamma = 20$ and configured the other parameters to their default values (Sec. \ref{para}). 
Fig. \ref{fig:Chg} shows how often the model popularity changes, i.e., the number of observation windows in which the DNN models' popularity changes once, affecting the algorithms' performance. 
CoCaR consistently achieves the best results after LR, owing to its ability to leverage BS caching results from the previous observation window during decision-making and balance the impact of model loading time by switching among dynamic DNN submodels, thereby quickly caching DNN models and adapting to environmental changes.
In contrast, the other baselines make independent decisions for each observation window, resulting in more fluctuating results in response to popularity changes.
Even when popularity changes in every window, CoCaR's average precision exceeds other baselines by more than 38.2\%, second only to the LR, while maintaining the highest hit rate.

\begin{figure}[t]
\centering
\includegraphics[width=.45\textwidth]{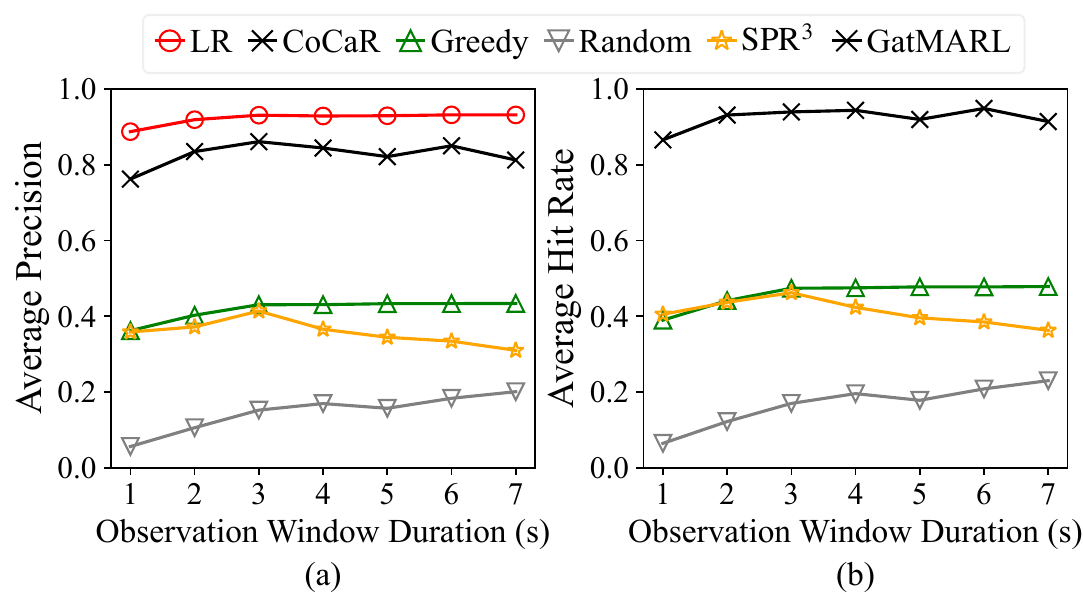}
\vspace{-6pt}
\caption{Impact of different observation window duration: (a) Average inference precision; (b) Average hit rate.}
\label{fig:Duration} 
\end{figure}

\begin{figure}[t]
\centering
\includegraphics[width=.45\textwidth]{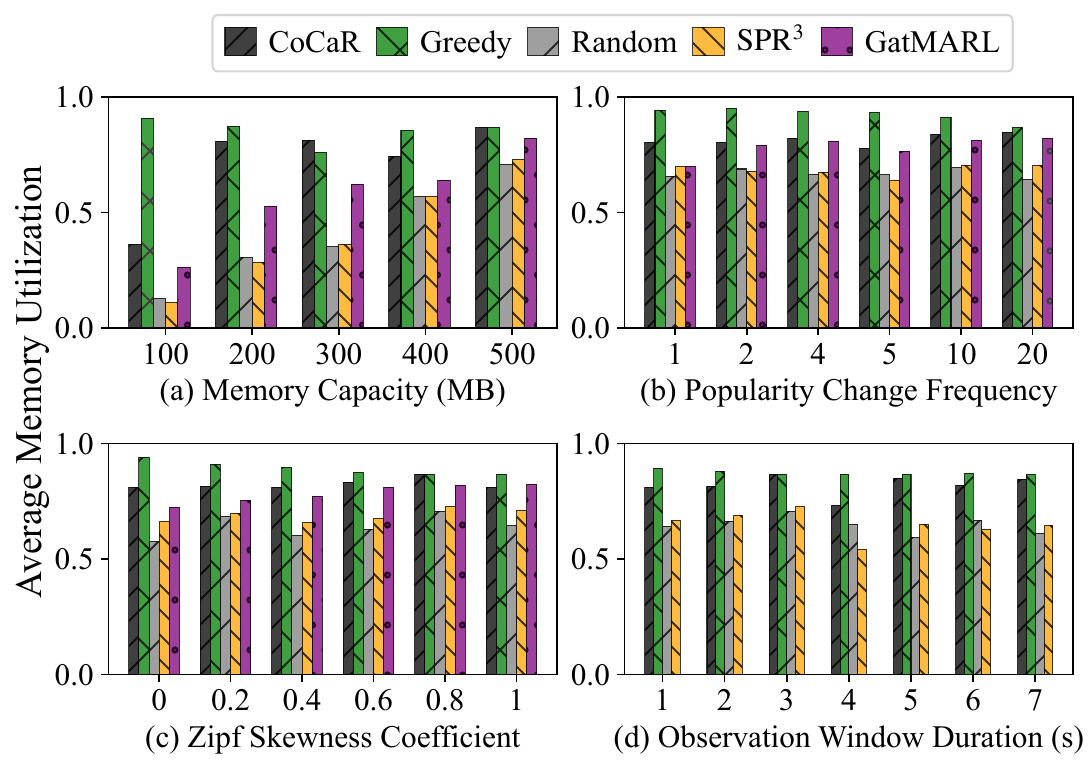}
\vspace{-6pt}
\caption{Average BS memory resource utilization.} 
\label{fig:utilization}
\end{figure}

\textbf{Impact of Zipf Skewness Coefficient.}
Fig. \ref{fig:Z} illustrates the performance of algorithms across Zipf skewness coefficients ranging from 0 to 1.
CoCaR consistently achieves higher average precision and hit rate as Zipf skewness increases, with slight fluctuations due to random rounding. 
Additionally, since a higher coefficient results in more frequent requests for a few highly popular model types, the upward trend in the Greedy algorithm becomes more evident.
When the Zipf skewness coefficient is zero, the popularity vector is uniform, indicating that user requests are completely randomized. Despite this, CoCaR achieves favourable results by leveraging a multi-edge collaboration mechanism and quickly adjusting caching schemes via submodel switching to serve more users.

\textbf{Impact of Observation Window Duration}.
Fig. \ref{fig:Duration} shows the impact of varying the observation window duration on algorithm performance, with the other default settings held constant (Sec. \ref{para}). According to the parameter settings in Sec. \ref{para}, given the observation window duration $\Delta \tau=1$s, there are $\Gamma = 30$ windows, each with $U=200$ users. LR, CoCaR, and $\text{SPR}^3$ exhibit an initial increase followed by a decrease, achieving optimal performance at a duration of 3s.
This is due to the imbalance between model loading time and observation window duration.
A window that is too short significantly increases the time spent on loading models, thereby reducing the valid model service time, inference precision, and hit rate.
Conversely, an overly long window may prevent the algorithm from adapting promptly to changing user requests, thereby affecting performance.
Additionally, fixed hyperparameters, such as the number of users during GarMARL training, limit its decision-making in this experiment.

\textbf{BS Resource Utilization.}
Fig. \ref{fig:utilization} shows the average BS resource utilization of all algorithms under various influencing factors.
\revi{Notably, the Greedy algorithm generally exhibits the highest memory utilization due to its aggressive strategy of prioritizing popular, high-precision submodels, which quickly saturates the limited memory capacity. However, since Greedy relies on independent decision-making at each base station and neglects global information, its average inference precision and hit rate remain significantly lower than CoCaR's. In contrast to merely filling available memory, CoCaR strategically balances submodel size, model-loading latency, and caching diversity from a system-wide perspective.}
Based on the experimental analysis above, CoCaR demonstrates more effective, finer-grained utilization of BS resources, with each unit of resource usage yielding higher average inference precision and hit rate, thereby improving overall service quality and user coverage. Under the experimental setting in Sec. \ref{para}, CoCaR can utilize at least 86\% of BS memory resources.

\begin{figure}[t]
\centering
\includegraphics[width=.48\textwidth]{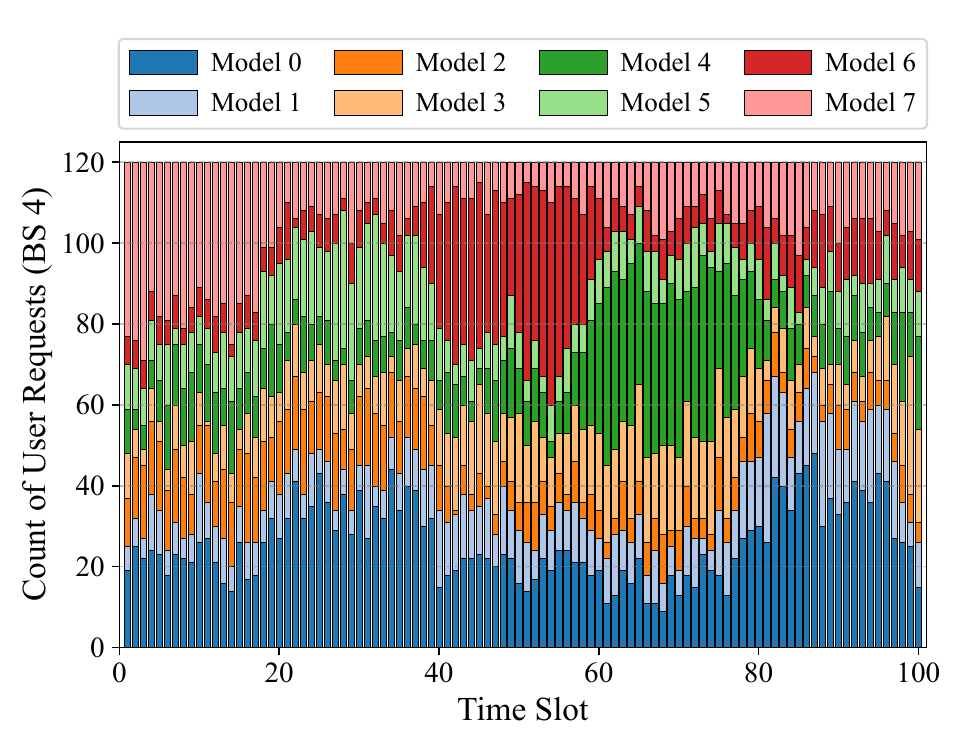}
\vspace{-6pt}
\caption{Count of user requests for DNN models at BS 4 in each time.} 
\label{fig:user_distribution}
\end{figure}

\subsection{Evaluation Results and Analysis of CoCaR-OL}










\textbf{Extended Settings.}
We set the duration of each time slot to $\Delta t=0.5s$, with a total of $\mathcal{T} = 100$ time slots. 
In each time slot, $round = 3$ BSs are randomly selected, and a caching decision is made based on user requests in the past $\Delta T^P = 10$ time slots and the expected future gain in the next $\Delta T^F = 5$ time slots. 
The smoothing factor $\alpha$ and the discount factor $\gamma$ are both set to 0.9. 
In each time slot, $U = 600$ users simultaneously initiate requests for model type $\mathcal{M}$. 
The popularity of DNN models at each BS is updated every 20 time slots, with a warm-up phase starting 5 time slots earlier to gradually adjust it. Other parameters are the same as those defined in Sec. \ref{para}. Fig. \ref{fig:user_distribution} illustrates the distribution of user requests at BS 4 under the above experimental setup.

In the online scenario, we introduce the following online caching algorithms for comparison:

\begin{itemize}

\item \textbf{LFU \cite{lfu}.} 
In each time slot, $round$ BSs are randomly and independently selected. For each selected BS, model request frequencies over the past $\Delta T^P$ time slots at the BS and its one-hop neighbors are computed. 
The cached submodel of the most frequent model is enlarged, while the cached submodel of the least frequent model is gradually reduced until the memory constraint is met. 

\item \textbf{LFU-MAD \cite{lfu-mad}.} It accounts for delayed hits in caching and adjusts the strategy by ranking content. In our experiment, ranks are based on weighted request frequency, with higher weights to more recent user requests. The caching decision is similar to the LFU in this paper.

\item \textbf{Random.} In each timeslot, $round$ BSs are randomly selected. At each selected BS, a cached submodel is randomly chosen and enlarged, and a combination of the remaining submodels that satisfies the memory constraint is then randomly selected for caching.

\end{itemize}

\begin{table}[t]
    \centering
    \caption{Performance Comparison in Online Scenario (HR: Hit Rate)}
    \label{tab:online_res}
    \vspace{-6pt}
    \begin{tabular}{@{}lccccc@{}}
        \toprule
        & \multicolumn{2}{c}{\textbf{w/ Partition}} & \multicolumn{2}{c}{\textbf{w/o Partition}} \\
        \cmidrule(lr){2-3} \cmidrule(l){4-5}
        \textbf{Method} & \textbf{Avg. QoE} & \textbf{Avg. HR} & \textbf{Avg. QoE} & \textbf{Avg. HR} \\
        \midrule
        \textbf{CoCaR-OL (Ours)} & \textbf{0.780} & \textbf{0.963} & \textbf{0.414} & \textbf{0.472} \\
        LFU-MAD & 0.457 & 0.556 & 0.216 & 0.246 \\
        LFU & 0.420 & 0.511 & 0.202 & 0.229 \\
        Random & 0.415 & 0.536 & 0.098 & 0.112 \\
        \bottomrule
    \end{tabular}
    \vspace{-3pt}
\end{table}

In our evaluation, for the above algorithms, only submodels of models that are not currently being downloaded can be switched, and the memory required to download submodels is considered in each decision. Additionally, we considered versions of CoCaR-OL and the above baselines without the dynamic DNN switching mechanism (i.e., model partitioning), meaning that either the model is not cached at all or the complete, original model is cached.
\revi{As shown in Table \ref{tab:online_res}, CoCaR-OL (w/ Partition) demonstrates superior performance in the online scenario, consistently outperforms all baselines in average QoE and hit rate.}

\begin{figure}[t]
\centering
\includegraphics[width=.48\textwidth]{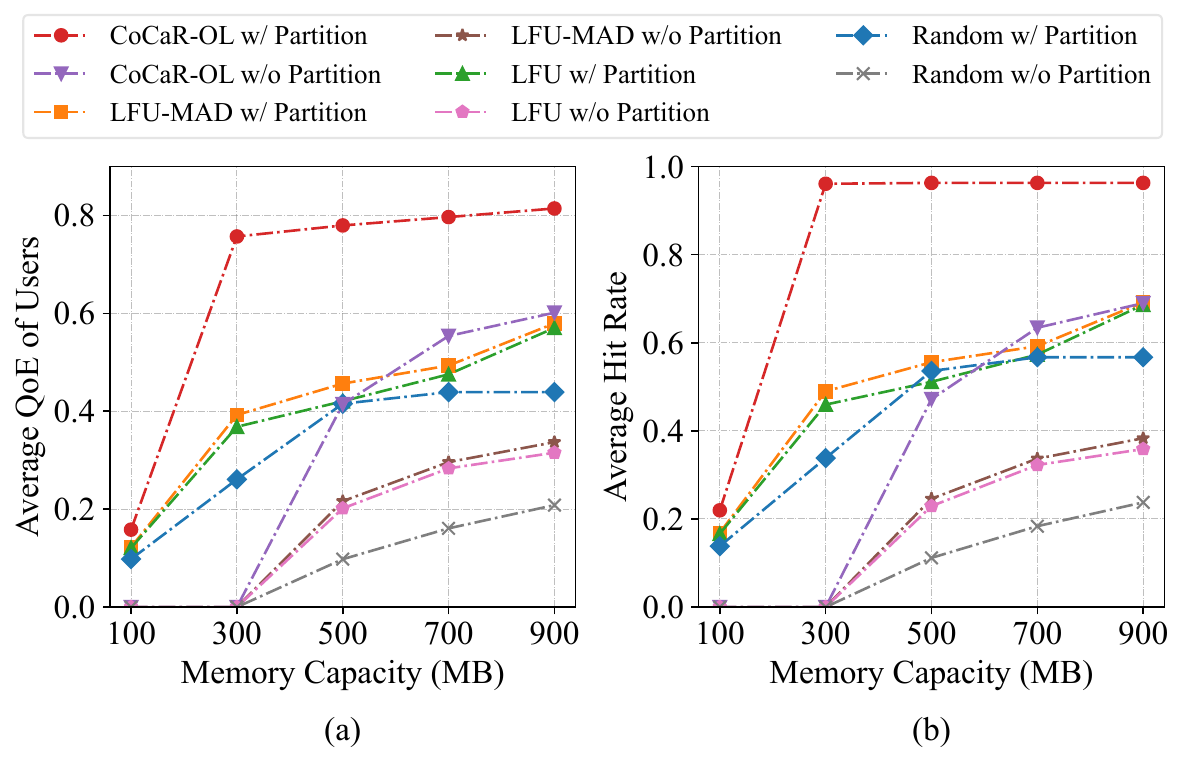}
\vspace{-6pt}
\caption{Effect of BS memory capacity: (a) Average QoE of users; (b) Average hit rate.}
\label{fig:OL_R} 
\end{figure}

\begin{figure}[t]
\centering
\includegraphics[width=.48\textwidth]{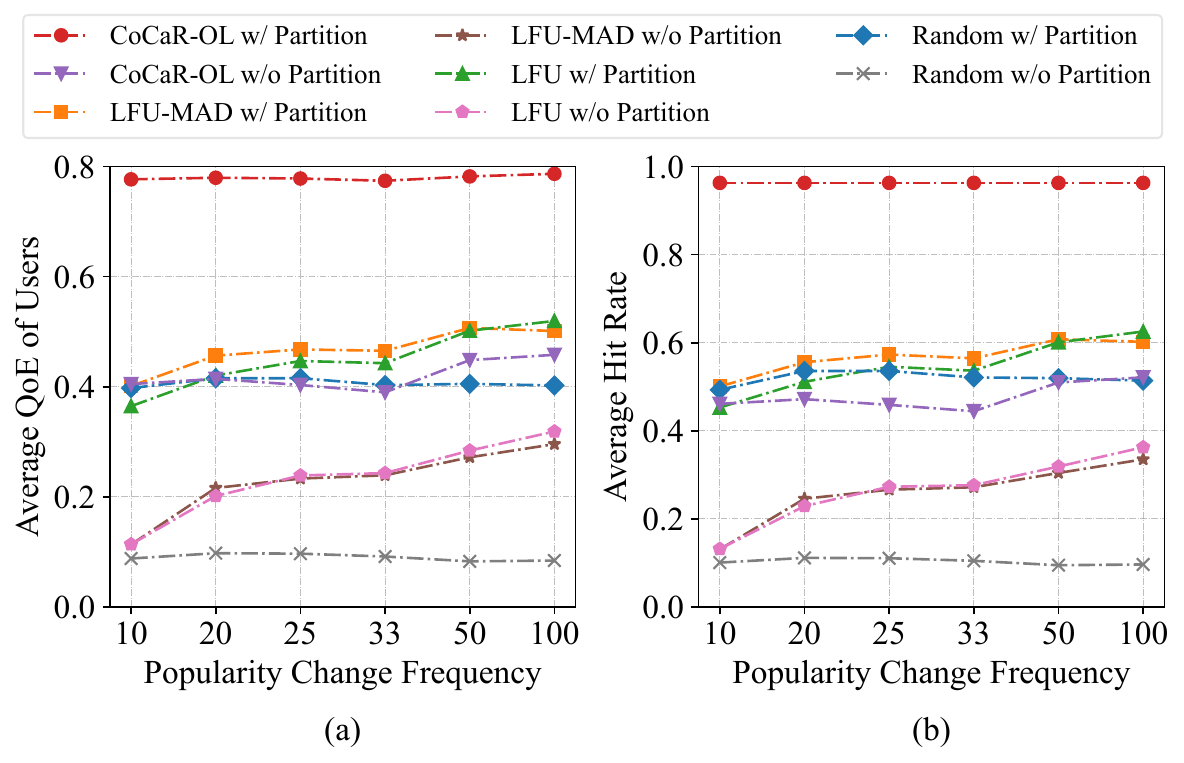}
\vspace{-6pt}
\caption{Effect of popularity change frequency: (a) Average QoE of users; (b) Average hit rate.}
\label{fig:OL_C} 
\end{figure}

\begin{figure}[t]
\centering
\includegraphics[width=.48\textwidth]{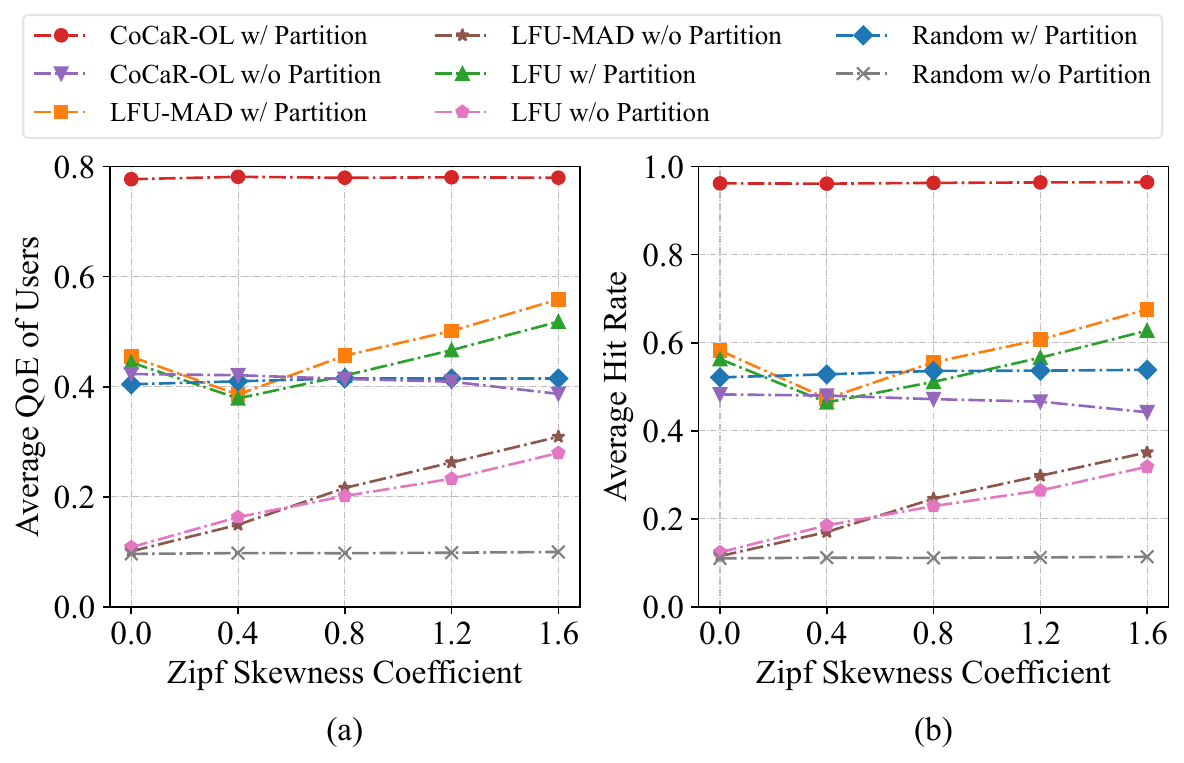}
\vspace{-6pt}
\caption{Effect of Zipf skewness coefficient: (a) Average QoE of users; (b) Average hit rate.}
\label{fig:OL_Z} 
\end{figure}

\textbf{Effect of BS Memory Capacity.}
We evaluated the performance of each algorithm with cache sizes ranging from 100MB to 900\,MB. 
As shown in Fig. \ref{fig:OL_R}, when the cache is small (e.g., 100\,MB), only the smallest submodel of each model can be cached, resulting in low user QoE and cache hit rates across all algorithms. 
As the cache size increases, larger submodels can be cached, thereby providing better service to more users and improving QoE and hit rate for all algorithms. 
Moreover, algorithms with dynamic DNN submodel switching consistently outperform their counterparts that cache only the full original models.
For instance, with a cache size of 500\,MB, CoCaR-OL with model partitioning achieves a 32.3\% higher user QoE than LFU-MAD with partitioning, and a 36.5\% higher QoE than CoCaR-OL without partitioning. 
This improvement stems from our dynamic DNN-based caching mechanism, which continues to serve users with the current submodels while new submodels are being downloaded. 
These results demonstrate that the dynamic DNN caching mechanism can adapt to different caching schemes, leveraging fine-grained memory usage and enabling flexible, rapid model switching to enhance user service.

\textbf{Effect of Popularity Change Frequency.}
Fig. \ref{fig:OL_C} illustrates the impact of the frequency of DNN model popularity changes, with the popularity changing from every 10 time slots to every 100 slots. 
Overall, the user's QoE and cache hit rate of all algorithms gradually increase as the frequency of popularity changes decreases. 
Furthermore, since the proposed CoCaR-OL algorithm considers the impact of each caching decision on the global expected future gain, while others rely solely on partial information, it exhibits greater stability and consistently outperforms other algorithms regardless of popularity change frequency.
When popularity changes every 10 time slots, LFU with model partitioning achieves 3.3\% and 3.2\% lower user QoE than LFU-MAD and Random, respectively. This is because LFU computes request frequencies over the past 10 time slots, so when popularity shifts abruptly, its caching decisions lag behind the new distribution. In contrast, LFU-MAD gives more weight to recent requests, enabling faster adaptation to local popularity changes.

\textbf{Effect of the Zipf Skewness Coefficient.}
The impact of the Zipf skewness coefficient on algorithm performance is shown in Fig. \ref{fig:OL_Z}.
As the skewness coefficient increases, the proposed CoCaR-OL consistently leverages global information to generate high-quality caching decisions, maintaining stable performance. 
In contrast, LFU and LFU-MAD with model partitioning exhibit much weaker performance.
This is because the LFU strategy relies exclusively on frequency for its decisions, disregarding other critical factors such as model size and precision.
These limitations further underscore the advantages of our gain-oriented CoCaR-OL in balancing precision and latency.
When the Zipf coefficient is zero, popular models change randomly, and LFU tends to switch to caching smaller submodels of different models to provide some services.
As the Zipf coefficient increases to 0.4, LFU gradually caches the largest submodels of popular models, but the requests for these models may not greatly exceed those for less popular models, leading to overall performance degradation. This degradation is mitigated by further increasing the coefficient, ultimately resulting in improved performance.

\section{Conclusion and future work}
\label{conclusion}

\revii{In this paper, we study the joint dynamic model caching and request routing optimization problem in MEC networks. 
By introducing dynamic DNNs, the complete original model is decomposed into multiple interrelated and switchable submodels, enabling more fine-grained and flexible resource management as well as adaptive caching strategy adjustments.}
Then, we perform multiple equivalent transformations and relaxations for the joint optimization problem and propose a novel random rounding and LP-based algorithm, CoCaR, to solve it.
Moreover, to adapt to the online scenario where user requests are difficult to predict, we further propose CoCaR-OL as an extension of CoCaR.
Theoretical analysis and experimental results show that the proposed CoCaR achieves near-optimal performance, and CoCaR-OL outperforms other benchmark algorithms in online settings.
Therefore, our proposed dynamic DNN-based caching mechanism effectively exploits memory resources in a fine-grained manner, flexibly adjusting caching strategies to enhance service delivery for users.

In the future, we will further explore BSs' resource allocation and distributed decision-making scheme in the joint model caching and request routing problem.
\revi{We will also take into account real-world environmental factors, such as network dynamics, congestion, and failures, to enhance the system's scalability and robustness in complex edge environments.}

\section*{Acknowledgments} 
The authors also thank the Big Data Computing Center of Southeast University for providing the experiment environment and computing facility.
The authors are also grateful to SF Technology Co., Ltd. for their valuable discussions and technical support.

\bibliographystyle{IEEEtran}
\bibliography{reference}

\end{CJK}
\end{document}